\newtheorem{theorem}{Theorem}[section]
\newtheorem{lemma}[theorem]{Lemma}
\newtheorem{corollary}[theorem]{Corollary}
\newtheorem{proposition}[theorem]{Proposition}
\newtheorem{defi/prop}[theorem]{Definition/Proposition}
\newcommand{\N}{\mathbb{N}}
\newcommand{\R}{\mathbb{R}}
\newcommand{\E}{\mathbb{E}}
\renewcommand{\P}{\mathbb{P}}
\renewcommand{\leq}{\leqslant}
\renewcommand{\geq}{\geqslant}
\newcommand{\st}{\  : \ }
\DeclareMathOperator{\Tr}{Tr}
\newcommand{\ketbra}[2]{| #1 \rangle\!\langle #2 |}
\newcommand{\bra}[1]{\langle #1 |}
\newcommand{\ket}[1]{| #1 \rangle}
\author{C.E.~Gonz\'{a}lez-Guill\'{e}n}
\author{C.~Lancien}
\author{C.~Palazuelos}
\author{I.~Villanueva}
\address{\textbf{Carlos Gonz\'{a}lez-Guill\'{e}n:} Departamento de Matem\'{a}ticas del \'{A}rea Industrial, E.T.S.I. Industriales, Universidad Polit\'{e}cnica de Madrid, 28006 Madrid, Spain \& Instituto de Matem\'{a}tica Interdisciplinar, Universidad Complutense de Madrid, 28040 Madrid, Spain.}
\email{carlos.gguillen@upm.es}
\address{\textbf{C\'{e}cilia Lancien:} Departamento de An\'{a}lisis Matem\'{a}tico, Universidad Complutense de Madrid, 28040 Madrid, Spain.} \email{cecilia.lancien@free.fr}
\address{\textbf{Carlos Palazuelos:} Departamento de An\'{a}lisis Matem\'{a}tico, Universidad Complutense de Madrid, 28040 Madrid, Spain \& Instituto de Ciencias Matem\'{a}ticas, 28049 Madrid, Spain.}
\email{carlospalazuelos@ucm.es}
\address{\textbf{Ignacio Villanueva:} Departamento de An\'{a}lisis Matem\'{a}tico, Universidad Complutense de Madrid, 28040 Madrid, Spain \& Instituto de Matem\'{a}tica Interdisciplinar, Universidad Complutense de Madrid, 28040 Madrid, Spain.}
\email{ignaciov@ucm.es}
\title{Random quantum correlations are generically non-classical}
\begin{document}

\maketitle

\begin{abstract}
It is now a well-known fact that the correlations arising from local dichotomic measurements on an entangled quantum state may exhibit intrinsically non-classical features. In this paper we delve into a comprehensive study of \emph{random} instances of such bipartite correlations. The main question we are interested in is: given a quantum correlation, taken at random, how likely is it that it is truly non-explainable by a classical model? We show that, under very general assumptions on the considered distribution, a random correlation which lies on the border of the quantum set is with high probability outside the classical set. What is more, we are able to provide the Bell inequality certifying this fact. On the technical side, our results follow from (i) estimating precisely the ``quantum norm'' of a random matrix, and (ii) lower bounding sharply enough its ``classical norm'', hence proving a gap between the two. Along the way, we need a non-trivial upper bound on the $\infty{\rightarrow}1$ norm of a random orthogonal matrix, which might be of independent interest.
\end{abstract}

\section{Introduction}
The existence of quantum bipartite correlations which cannot be explained in a local realistic universe is one of the main features of quantum mechanics, both from a theoretical and an applied point of view. This phenomenon, known as \textit{quantum non-locality}, dates back to the mid 20th century (\cite{EPR}, \cite{Bell}) and it was first seen as a purely theoretical issue which could potentially lead to experimental verifications of the non-locality of nature. After a great effort, the recent experimental verification of quantum non-locality \cite{Hensen15} is indeed the strongest evidence we have that nature does not obey the classical laws. However, the main reason why quantum non-locality has become a central topic in quantum information theory is its great relevance in a variety of applications such as cryptography \cite{Aci+,AGM}, communication complexity \cite{BCMdW} or random number generators \cite{Pironio10}.

Most of the research in the understanding of quantum non-locality has focused in particular cases of quantum correlations and their dual objects, Bell inequalities. However, so far, we do not know much about the {\em generic case}. That is, if we consider a {\em random} correlation following a given probability distribution, can we say something about its probability of being quantum or classical, or something about how far it is, typically, from being any of them?

One of the first steps in this direction was given in  \cite{Amb+}, where the authors study the dual question, that is: how likely is it for a random (in a certain sense) Bell inequality to attain a strictly higher value on quantum correlations than on classical correlations? Later, in \cite{GJPV}, some of the authors of this note initiated the study of random correlations in the particular case where these correlations arise as the product of two rectangular normalized Gaussian matrices, a setting motivated by a well known  result of Tsirelson that we explain below. These results, among others,  are summarized in greater depth in the survey paper \cite{Pal}.

In this note, we study a very comprehensive family of random correlations, namely those which are bi-orthogonally invariant (this means that their probability distribution does not change when we multiply them by an orthogonal matrix, either from the right or from the left). The important families of Haar distributed orthogonal matrices and Gaussian matrices are of this kind, as well as the products of Gaussian rectangular matrices mentioned above.

Before going any further, let us recall the precise definitions of what we mean when we talk about classical vs quantum correlations.

Let $\rho$ be a bipartite (entangled) quantum state, on some tensor product complex Hilbert space $\mathrm{H}\otimes\mathrm{K}$, shared by two local observers. Assume that each of them can perform a binary-outcome measurement on his part of $\rho$, which he can choose amongst a set of $n$. What we are interested in is what is usually referred to as the \textit{quantum correlation matrix} $\tau$ arising from this scenario, which is the $n\times n$ real matrix defined by: for each $1\leq i,j\leq n$, $\tau_{i,j}=2\pi_{i,j}-1$, where $\pi_{i,j}$ is the probability that both observers obtain the same outcome, given that they have performed measurements $i$ on $\mathrm{H}$ and $j$ on $\mathrm{K}$, respectively. Such quantum correlation matrices can actually be characterized in an alternative way, which has the advantage of being mathematically very simple. Indeed, by a famous result of Tsirelson \cite{Tsi}, we know that $\tau$ is an $n\times n$ quantum correlation matrix if and only if there exist unit vectors $u_1,\ldots,u_n,v_1,\ldots,v_n$ in some real Hilbert space $\mathrm{E}$ such that, for each $1\leq i,j\leq n$, $\tau_{i,j}=\langle u_i,v_j\rangle$.

As a particular case, we will say that such $n\times n$ real matrix $\tau$ is a \textit{classical correlation matrix} if the measurement procedure described above can be explained by means of a local hidden variable model \cite{Bell}. Alternatively, this means that $\tau$ is a convex combination of $n\times n$ sign matrices $\varsigma$ defined by: for each $1\leq i,j\leq n$, $\varsigma_{i,j}=\alpha_i\beta_j$ with $\alpha_i,\beta_j\in\{\pm 1\}$.

\subsection{Summary of our main results}

In the next subsection we will give proper definitions of the tensor norm formalism which is the natural framework for studying such correlation matrices. Here we just briefly present the few notation we need in order to state our main results.

Given a correlation matrix $\tau$, considered as an element of $\ell_\infty^n\otimes\ell_\infty^n$, there are norms which measure its {\em quantumness} and its {\em classicality}. These are known as the gamma-$2$ norm $\gamma_2(\cdot)$ and the projective norm $\|\cdot\|_{\ell_\infty^n\otimes_\pi \ell_\infty^n}$, respectively. Specifically, $\tau$ is a quantum correlation matrix if and only if $\gamma_2(\tau)\leq 1$ and $\tau$ is a classical correlation matrix if and only if $\|\tau\|_{\ell_\infty^n\otimes_\pi \ell_\infty^n}\leq 1$.

The main result of our paper is the one below. Before stating it, let us clarify once and for all some conventions that we use repeatedly in the remainder of this paper. We are usually concerned with the asymptotic behaviour of the above-mentioned (and other) norms, i.e. when the size $n$ of the considered matrix grows. In that setting, we use the following (standard) notation: if $f(n),g(n)$ are quantities depending on $n$, $f(n)=o(g(n))$, resp. $f(n)\sim g(n)$, means that the ratio $f(n)/g(n)$ goes to $0$, resp. $1$, as $n$ goes to infinity. Also, when we say that, as $n$ grows, some random event holds ``with high probability'', it means ``with probability at least $1-o(1)$'' (more often than not, the $o(1)$ can in fact be shown to decay even exponentially with $n$).

\begin{theorem} \label{th:q-c-norms:intro}
Let $T$ be an $n\times n$ random matrix with bi-orthogonally invariant distribution, and assume that there exists a constant $r>0$ such that, with high probability as $n\rightarrow\infty$, $\|T\|_{\infty}\leq \left(r +o(1)\right)\|T\|_1/n$. Then, with high probability as $n\rightarrow\infty$,
\begin{equation} \label{eq:q-c-norms} \|T\|_{\ell_{\infty}^n\otimes_{\pi}\ell_{\infty}^n} \geq \left(\sqrt{\frac{16}{15}} - o(1)\right)\gamma_2(T), \end{equation}
so that $\tau=T/\gamma_2(T)$ is an $n\times n$ correlation matrix which is quantum by construction and non-classical with high probability as $n\rightarrow\infty$.
\end{theorem}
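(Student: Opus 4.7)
The plan is to establish the inequality by controlling the two norms separately, via the singular value decomposition of $T$. By bi-orthogonal invariance, I write $T = U\Sigma V^T$, where $U, V$ are independent Haar-distributed orthogonal $n\times n$ matrices and $\Sigma = \diag(\sigma_1,\ldots,\sigma_n)$ is the diagonal of singular values, independent of $(U,V)$. All subsequent estimates are then performed over the randomness of $(U,V)$ conditionally on $\Sigma$.

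For the upper bound on $\gamma_2(T)$, I use the explicit factorization $T_{ij} = \langle u_i, v_j\rangle$ with $u_i := (\sqrt{\sigma_k}\,U_{ik})_k$ and $v_j := (\sqrt{\sigma_k}\,V_{jk})_k$, yielding $\gamma_2(T) \leq \max_i \|u_i\|_2 \cdot \max_j \|v_j\|_2$. Since $\mathbb{E}\|u_i\|_2^2 = \sum_k \sigma_k/n = \|T\|_1/n$ and the map $U_i \in S^{n-1} \mapsto \sum_k \sigma_k U_{ik}^2$ is $2\|T\|_\infty$-Lipschitz, Levy's concentration together with a union bound over $i$ gives $\max_i \|u_i\|_2^2 \leq \|T\|_1/n + O(\|T\|_\infty \sqrt{\log n/n})$ with high probability. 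Under the standing flatness hypothesis $\|T\|_\infty \leq (r+o(1))\|T\|_1/n$, the error term is $o(\|T\|_1/n)$; the symmetric analysis for the $v_j$ then gives $\gamma_2(T) \leq (1+o(1))\|T\|_1/n$ with high probability.

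For the lower bound on the projective norm, I invoke the duality $(\ell_\infty^n\otimes_\pi \ell_\infty^n)^* = \ell_1^n\otimes_\epsilon \ell_1^n$ and test it against the tensor $T$ itself, obtaining
\[
\|T\|_{\ell_\infty^n\otimes_\pi \ell_\infty^n} \;\geq\; \frac{\langle T,T\rangle}{\|T\|_{\ell_1^n\otimes_\epsilon \ell_1^n}} \;=\; \frac{\|T\|_2^2}{\|T\|_{\infty\to 1}},
\]
so that the problem reduces to a sharp upper bound on $\|T\|_{\infty\to 1} = \max_{\alpha,\beta\in\{\pm 1\}^n} \beta^T U\Sigma V^T \alpha$. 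For each fixed $(\alpha,\beta)$, this bilinear form has mean $0$ and second moment exactly $\|T\|_2^2$; the goal is an exponential concentration sharp enough that, after the union bound over the $4^n$ sign pairs, one obtains $\|T\|_{\infty\to 1} \leq (c+o(1))\sqrt{n}\,\|T\|_2$ for a specific constant $c \leq \sqrt{15/16}$.

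I expect the main obstacle to be exactly this last step, which the abstract flags as the \emph{non-trivial upper bound on the $\infty\to 1$ norm of a random orthogonal matrix}. A naive Hanson-Wright bound for bilinear forms, combined with a $4^n$-term union bound, produces the wrong constant $c = 2\sqrt{\log 2} \approx 1.665$, which overshoots the desired gap. The right strategy is rather to freeze one of the two Haar matrices, say $V$, and exploit Gaussian/log-Sobolev concentration on $SO(n)$ --- or a direct Weingarten moment computation --- for the linear-in-$U$ quantity $\max_\beta \|U(\Sigma V^T\alpha)\|_1$, then handle the remaining $V$-randomness by a symmetric second step. Granting such a bound, one assembles
\[
\frac{\|T\|_{\ell_\infty^n\otimes_\pi \ell_\infty^n}}{\gamma_2(T)} \;\geq\; \frac{(1-o(1))\,n\|T\|_2^2}{\|T\|_1\cdot\|T\|_{\infty\to 1}} \;\geq\; \frac{1-o(1)}{c}\cdot\frac{\sqrt{n}\,\|T\|_2}{\|T\|_1},
\]
and the Cauchy-Schwarz inequality $\|T\|_1\leq\sqrt{n}\,\|T\|_2$, together with the explicit value of $c$ delivered by the $\infty\to 1$ analysis, yields the announced constant $\sqrt{16/15}$.
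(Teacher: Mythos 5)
The first half of your plan (the upper bound $\gamma_2(T)\leq(1+o(1))\,\|T\|_1/n$) is exactly the paper's argument: factor $T=(U\sqrt{\Sigma})(\sqrt{\Sigma}V^t)$, apply Levy's lemma to the row/column norms, and invoke the flatness hypothesis. That part is fine.

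The second half has a genuine gap, and it stems from a different choice of dual witness than the one the paper uses. You test the duality $\bigl(\ell_\infty^n\otimes_\pi\ell_\infty^n\bigr)^* = \ell_1^n\otimes_\epsilon\ell_1^n$ against $T$ itself, which forces you to establish an upper bound of the specific form $\|T\|_{\infty\to 1}\leq\bigl(\sqrt{15/16}+o(1)\bigr)\sqrt{n}\,\|T\|_2$. You flag this as the ``main obstacle'' but do not close it, and I believe it \emph{cannot} be closed in this form under the stated hypotheses. The reason: writing $\beta^tT\alpha = n\,\langle\Sigma y,x\rangle$ with $x=V^t\alpha/\sqrt n$, $y=U^t\beta/\sqrt n$ uniform on $S^{n-1}$, the spherical-cap tail in $x$ (conditioned on $y$) has scale $\|\Sigma y\|_2$, not $\|\Sigma\|_2/\sqrt n = \|T\|_2/\sqrt n$. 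Although $\mathbb E\,\|\Sigma y\|_2^2 = \|T\|_2^2/n$, after the union bound over the $2^n$ sign vectors $\beta$ the worst-case $\|\Sigma y\|_2$ can exceed its mean by a factor depending on $r$ (the deviation is of order $\sqrt n\,\|\Sigma\|_\infty \lesssim r\,\|T\|_2$, which is \emph{not} $o(\|T\|_2)$). So the constant you would actually obtain for $c$ is $\sqrt{15/16}$ multiplied by an $r$-dependent factor strictly larger than $1$, and the final ratio drops below $\sqrt{16/15}$. The paper avoids this entirely by choosing the dual witness $A = UV^t$ rather than $T$: since $\langle T, UV^t\rangle = \operatorname{Tr}(\Sigma) = \|T\|_1$, the numerator is still spectrum-dependent, but the denominator $\|UV^t\|_{\ell_1^n\otimes_\epsilon\ell_1^n}$ involves only the Haar orthogonal matrix $UV^t$ and is \emph{independent of $\Sigma$}. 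The needed estimate is then precisely Proposition~\ref{prop:l_1-l_1}, $\|O\|_{\ell_1^n\otimes_\epsilon\ell_1^n}\leq(\sqrt{15/16}+o(1))n$ for $O$ Haar, where the tight Levy bound with deterministic Lipschitz constant $\sqrt n$ beats the $4^n$ union bound cleanly. I suggest you replace your test element $T$ by $UV^t$; everything else in your outline then goes through without the problematic $\infty\to 1$ bound on $T$.
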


Informally stated, this means the following: if $T$ is a random matrix with enough invariance and a flat enough spectrum, then with high probability, its quantum norm is strictly smaller, with constant separation, than its classical norm. As already mentioned, many ``usual'' random matrix models satisfy these two technical assumptions. Moreover, in the proof of the result we see that we can explicitly show a Bell inequality witnessing this separation: if $T=U\Sigma V^t$ is the singular value decomposition of $T$, then $UV^t$ is a Bell inequality which will work with high probability.

The problem of describing the set of quantum correlations was already considered by Tsirelson \cite{Tsi}, who showed its complex geometry. More importantly, the study of those quantum correlations which are not classical can be understood as the study of the location of the classical correlation polytope inside the quantum correlation set. A natural interpretation of Theorem \ref{th:q-c-norms:intro} gives some information along these lines: in almost all directions considered at random in $\R^{n^2}$, the border of the classical polytope is strictly inside the quantum set (see Corollary \ref{cor:gaussian} for a more precise statement).

To prove Theorem \ref{th:q-c-norms:intro} we control both the quantum and the classical norms of $T$. On the one hand, we calculate the quantum norm of $T$: it is with high probability $\|T\|_1/n$. We believe this is a relevant result in its own right, since, to the best of our knowledge, until the moment there were no known results about the quantum norm of a random matrix. On the other hand, we prove a non trivial lower bound on the classical norm of $T$: it is with high probability greater than $\sqrt{16/15}\,\|T\|_1/n$. The main technical tool in order to do so is a fine upper bound on the norm dual to the projective norm (the so-called injective norm) of a Haar distributed orthogonal matrix. The result appears in Proposition \ref{prop:l_1-l_1}, and might be of independent interest, even though we do not have a proper proof of its tightness so far (despite strong numerical evidence).

Finally, using the techniques and results developed in this paper, we obtain a better understanding and an improvement of the results in \cite{GJPV}. In that paper, the following model of random correlation matrices was considered: Let $G,H$ be two independent $n\times m$ Gaussian matrices, and $\widetilde{G},\widetilde{H}$ be their normalized versions (i.e. with rows being unit vectors in $\R^m$). Note that this way of sampling matrix correlations is very natural if one follows the  characterization of quantum correlations due to Tsirelson mentioned above. Then the $n\times n$ correlation matrix $\tau= \widetilde{G} \widetilde{H}^t$ is a quantum correlation matrix by construction, and with probability tending to $1$ as $n$ grows to infinity, $\tau$ is (i)  classical if $m/n\geq 2$ and (ii) not classical if $m/n\leq 0.004$. Our newly developed methods allow us to calculate the quantum norm of $\tau$ and to improve the bound (ii) to $m/n\leq 0.1269$, significantly closing the existing gap.

\subsection{Correlation matrices and tensor norms}

Deciding whether a given matrix $\tau$ corresponds to a correlation matrix (either classical or quantum) can be mathematically neatly written. In order to do so, we need to recall first the definition of several tensor norms. Indeed, a real matrix of size $n\times n$ can always be viewed as an element of $\R^n\otimes\R^n$, by identifying, for each $1\leq i,j\leq n$, $\ketbra{e_i}{e_j}$ with $e_i\otimes e_j$ (where $\{e_1,\ldots,e_n\}$ denotes the canonical orthonormal basis of $\R^n$). This allows to define, for any $n\times n$ real matrix $A$, its injective tensor norm on $\ell_1^n\otimes\ell_1^n$ as
\begin{equation} \label{def:l_1-l_1} \|A\|_{\ell_1^n\otimes_{\epsilon}\ell_1^n} := \sup \left\{\, \sum_{i,j=1}^n A_{i,j}\alpha_i\beta_j \st \alpha_i,\beta_j\in\{\pm 1\} \,\right\}. \end{equation}

Denoting the set of norm one vectors of the Hilbert space $E$ by $S_E$, we can also define
the so-called $\gamma_2^*$ norm of $A$ as
\begin{equation} \label{def:gamma_2*} \gamma_2^*(A) := \sup \left\{\, \sum_{i,j=1}^n A_{i,j}\langle u_i,v_j\rangle \st u_i,v_j\in S_{\mathrm{E}} \,\right\}. \end{equation}
By the preceding explanations, we see that $\tau$ belongs to the set $\mathcal{C}$, resp. $\mathcal{Q}$, of classical, resp. quantum, correlation matrices if and only if, for all $A$ satisfying $\|A\|_{\ell_1^n\otimes_{\epsilon}\ell_1^n}\leq 1$, resp. $\gamma_2^*(A)\leq 1$, we have $\langle \tau,A \rangle :=\Tr(\tau A^t)\leq 1$ (where $A^t$ stands for the transposition of $A$, in the canonical basis). Rephrasing, this means that $\tau\in\mathcal{C}$, resp. $\tau\in\mathcal{Q}$, if and only if $\tau$ is in the unit ball of the norm dual to the $\ell_1^n\otimes_{\epsilon}\ell_1^n$ norm, resp. the $\gamma_2^*$ norm (where duality is meant with respect to the scalar product $\langle \cdot,\cdot\rangle$ introduced above).

The norm dual to $\|\cdot\|_{\ell_1^n\otimes_{\epsilon}\ell_1^n}$ is the projective tensor norm on $\ell_{\infty}^n\otimes\ell_{\infty}^n$, which can be written as
\begin{equation} \label{def:l_infty-l_infty} \|\tau\|_{\ell_{\infty}^n\otimes_{\pi}\ell_{\infty}^n} := \inf \left\{\, \sum_{k=1}^N \|x_k\|_{\infty}\|y_k\|_{\infty} \st \tau=\sum_{k=1}^N x_k\otimes y_k\,\right\}. \end{equation}
While the norm dual to $\gamma_2^*(\cdot)$ is, as the notation suggests, the so-called $\gamma_2$ norm, which can be written as
\begin{equation} \label{def:gamma_2} \gamma_2(\tau) := \inf \left\{\, \|X\|_{\ell_2{\rightarrow}\ell_{\infty}^n} \|Y\|_{\ell_1^n{\rightarrow}\ell_2} \st \tau=XY \,\right\}, \end{equation}
where, denoting by $R_i(X)$ the rows of an $n\times m$ matrix $X$ and by $C_j(Y)$ the columns of an $m\times n$ matrix $Y$, we have
\[ \|X\|_{\ell_2{\rightarrow}\ell_{\infty}^n}=\underset{1\leq i\leq n}{\max}\|R_i(X)\|_2\ \text{and}\ \|Y\|_{\ell_1^n{\rightarrow}\ell_2}=\underset{1\leq j\leq n}{\max}\|C_j(Y)\|_2. \]

Hence recapitulating, the reason why we are interested in these norms in the context of correlation matrices is clear, since given any $n\times n$ real matrix $\tau$,
\[ \tau\in\mathcal{C}\ \Leftrightarrow\ \|\tau\|_{\ell_{\infty}^n\otimes_{\pi}\ell_{\infty}^n}\leq 1\ \text{   and    }\ \tau\in\mathcal{Q}\ \Leftrightarrow\ \gamma_2(\tau)\leq 1. \]

It is a well known fact that these two norms cannot differ {\em too much}. Precisely, Grothendieck's inequality (see e.g. \cite{Pis}, Section 3, for further comments and proofs) tells us, for any $n\times n$ real matrix $T$,
\begin{equation} \label{eq:groth} \gamma_2(T)\leq \|T\|_{\ell_{\infty}^n\otimes_{\pi}\ell_{\infty}^n}\leq K_G\,\gamma_2(T), \end{equation}
where $K_G$ is the (real) Grothendieck constant, whose exact value is unknown (but somewhere between $1.67696...$ and $1.78221...$).

Hence, what Theorem \ref{th:q-c-norms:intro} actually tells us is that the first inequality in equation \eqref{eq:groth} can be improved to $\sqrt{16/15}\,\gamma_2(T)\leq \|T\|_{\ell_{\infty}^n\otimes_{\pi}\ell_{\infty}^n}$ when one is interested in an inequality which is not necessarily true for {\em any} matrix but only for {\em typical} ones (in a sense to be made precise).\

\subsection{Two needed technical lemmas}

We will crucially exploit at several occasions later on the following fact: for a bi-orthogonally invariant random matrix $T$, the three random matrices $U,V,\Sigma$ associated to the singular value decomposition $T=U\Sigma V^t$ are distributed independently of each other and $U, V$ are Haar distributed orthogonal matrices. The precise statement we will use is the following. A proof for it can be seen in \cite{GJPV}, Proposition 1.5.

\begin{lemma}\label{SVD}
Let $T$ be an $n \times n$ random matrix with bi-orthogonally invariant distribution, in some probability space $(\Omega,\mathbb{P})$. Then there exist three probability spaces $(\Omega_1, P_1)$, $(\Omega_2, P_2)$, $(\Omega_3, P_3)$ and three $n\times n$ random matrices $U(\omega_1)$, $\Sigma(\omega_2)$, $V(\omega_3)$ in $(\Omega_1, P_1)$,  $(\Omega_2, P_2)$, $(\Omega_3, P_3)$ respectively, such that
\begin{enumerate}
\item[(i)] The matrices $U,V$ are Haar distributed orthogonal matrices on $\R^n$.
\item[(ii)] The matrix $\Sigma$ is diagonal and all of its elements are non-negative.
\item[(iii)] The random matrix $T'(\omega_1, \omega_2, \omega_3)=U(\omega_1)\Sigma(\omega_2)V^t(\omega_3)$ follows the same distribution as the random matrix $T(\omega)$.
\end{enumerate}
\end{lemma}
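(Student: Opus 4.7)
The idea is to symmetrize $T$ by independent Haar orthogonal matrices and extract $U$, $\Sigma$, $V$ from a measurable singular value decomposition of $T$.

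First, I would fix a measurable SVD of $T$, namely $T = U_0 \Sigma_0 V_0^t$ with $\Sigma_0$ diagonal with non-negative entries (say in non-increasing order). The singular values themselves depend continuously on $T$, so $\Sigma_0$ is measurable without any subtlety; a measurable choice of $U_0, V_0$ exists off the locus of coincident singular values by a standard Borel selection argument. Next, on an enlarged probability space, I would introduce two independent Haar-distributed orthogonal matrices $\widehat{U}, \widehat{V}$, independent of $T$. I then consider
\[ \widehat{U}\,T\,\widehat{V}^t \;=\; (\widehat{U} U_0)\,\Sigma_0\,(\widehat{V} V_0)^t. \]
By conditioning on $(\widehat{U}, \widehat{V})$ and using bi-orthogonal invariance of $T$, the left-hand side has the same distribution as $T$. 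On the other hand, left-invariance of the Haar measure implies that, conditionally on $T$, each of $\widehat{U} U_0$ and $\widehat{V} V_0$ is Haar distributed; since $\widehat{U}, \widehat{V}$ are independent of each other and of $T$, these two matrices are also conditionally independent, hence unconditionally independent Haar and jointly independent of $\Sigma_0$.

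Setting $U := \widehat{U} U_0$, $\Sigma := \Sigma_0$, $V := \widehat{V} V_0$, I have realized $T$ in distribution as $U \Sigma V^t$ with the three required properties (non-negativity of the diagonal of $\Sigma$, Haar distribution of $U$ and $V$, and mutual independence of the three factors). To match the product-of-three-spaces formulation of the lemma, I take $(\Omega_1, P_1)$ and $(\Omega_3, P_3)$ to be the probability spaces carrying the two Haar matrices, and $(\Omega_2, P_2)$ to be a copy of the original space carrying $\Sigma_0$; mutual independence then lets me view $U$, $\Sigma$, $V$ as coordinate projections on $\Omega_1 \times \Omega_2 \times \Omega_3$.

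The only delicate technical step is the measurable selection of $U_0, V_0$ on the locus where $T$ has repeated singular values. I expect this to be the main obstacle, but it is benign: since we are ultimately multiplying on the left by independent Haar matrices, the particular selection gets washed out, so any measurable choice on the complement of the exceptional locus (extended arbitrarily on it) works. If one wishes to bypass the selection issue entirely, one can perturb $T$ by an independent infinitesimal Gaussian noise (for which singular values are almost surely distinct), carry out the argument verbatim, and pass to the limit using continuity of the relevant distributions.
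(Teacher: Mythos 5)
The paper does not supply its own proof of this lemma: it cites \cite{GJPV}, Proposition~1.5. So there is no internal argument to compare against. On its own terms, your proof is correct. The randomization step (conjugating a fixed measurable SVD by independent Haar rotations $\widehat{U},\widehat{V}$) is exactly the right device: conditionally on $T$, the pair $(\widehat{U}U_0,\widehat{V}V_0)$ is Haar$\times$Haar regardless of which Borel selection of $(U_0,V_0)$ was made, so its law is independent of $T$, hence of $\Sigma_0$, and $\widehat{U}T\widehat{V}^t=(\widehat{U}U_0)\Sigma_0(\widehat{V}V_0)^t$ has the law of $T$ by bi-orthogonal invariance. Two cosmetic remarks only: what you invoke when passing from $\widehat{U}$ to $\widehat{U}U_0$ is \emph{right}-invariance of Haar measure (multiplication on the right by a fixed element), though this is immaterial since Haar on the compact group $O(n)$ is bi-invariant; and when you ``extend arbitrarily'' $U_0,V_0$ on the coincidence locus you should still insist that $T=U_0\Sigma_0V_0^t$ holds there, which the measurable selection theorem (or your Gaussian-perturbation alternative) guarantees. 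Neither point is a gap.
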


We also recall here Levy's Lemma \cite{Levy}, which guarantees that regular enough (i.e. Lipschitz) functions on the unit sphere typically concentrate around their average behaviour. Note that when we talk about average ($\mathbb E f$) or median ($M_f$) of a function $f$  and probability of deviating from it, these are always computed with respect to the uniform probability measure on the unit sphere. We will actually use two versions of Levy's Lemma: in Section \ref{sec:quantum} a ``rough'' one will be enough for our purposes, while in Section \ref{sec:classical} we will really need a ``tight'' one . We consequently state both here, the reader being for instance referred to \cite{ASbook}, Chapter 5, for detailed comments and proofs. We just point out that the second inequality in Lemma \ref{levi} below follows from the fact that the volume (again with respect to the uniform probability measure) of a spherical cap in $S^{n-1}$ with geodesic radius $\theta$ is upper bounded by $(\sin\theta)^{n-1}/2$ (see e.g. \cite{ASbook}, Proposition 5.1, for a proof).

\begin{lemma}\label{levi}
Let $f:S^{n-1}\longrightarrow \mathbb R$ be an $L$-Lipschitz function. Then,
\[ \forall\ \epsilon>0,\ \mathbb P\left(f>\mathbb E f + \epsilon L\right)\leq e^{-cn\epsilon^2}, \]
where $c>0$ is a universal constant. Furthermore,
\[ \forall\ 0<\theta<\frac{\pi}{2},\ \mathbb P\left(f>M_f+ (\cos\theta) L\right) \leq \frac{1}{2}(\sin\theta)^{n-1} . \]
\end{lemma}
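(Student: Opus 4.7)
The plan is to deduce both inequalities from the spherical isoperimetric inequality of Lévy--Schmidt: among all Borel subsets $A\subset S^{n-1}$ of a given measure, the metric $t$-enlargement $A_t=\{x\in S^{n-1}\st d(x,A)\leq t\}$ has minimal measure when $A$ is a spherical cap. Once this geometric fact is granted, concentration of a Lipschitz function around its median becomes a one-variable estimate on caps, closed by the cap-volume bound $(\sin\theta)^{n-1}/2$ recalled in the text.

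I would start with the sharper (second) inequality. Setting $A:=\{f\leq M_f\}$, one has $\mu(A)\geq 1/2$ by definition of the median, and the $L$-Lipschitz hypothesis immediately gives $\{f>M_f+(\cos\theta) L\}\subset S^{n-1}\setminus A_{\cos\theta}$. Passing to a subset of $A$ of measure exactly $1/2$ (which can only enlarge the complement of the enlargement) and applying spherical isoperimetry to compare with a closed hemisphere $H$, a direct computation on $H$ identifies the extremal set with a spherical cap of geodesic radius $\theta$, and the cap-volume bound then concludes.

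For the first inequality, I would substitute $t=\cos\theta$ in the bound just obtained to get
\[ \mathbb P\left(f>M_f+tL\right)\leq\tfrac{1}{2}(1-t^2)^{(n-1)/2}\leq\tfrac{1}{2}\exp\!\left(-\tfrac{(n-1)t^2}{2}\right),\qquad t\in(0,1), \]
i.e.\ a subgaussian concentration around the median. Applying the same bound to $-f$ and integrating the resulting two-sided tails yields $|M_f-\mathbb E f|=O(L/\sqrt{n})$; absorbing this $O(L/\sqrt{n})$ shift into the exponent then produces the announced bound $\mathbb P(f>\mathbb E f+\epsilon L)\leq e^{-cn\epsilon^2}$ for a universal constant $c>0$.

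The only genuinely hard ingredient is the spherical isoperimetric inequality itself: the Lipschitz reduction, the identification of $S^{n-1}\setminus H_{\cos\theta}$ as a spherical cap, and the tail-to-moment passage $M_f\leftrightarrow\mathbb E f$ are all one-variable exercises. The cleanest proof of isoperimetry goes through two-point (Steiner-type) symmetrization on the sphere: reflecting a measurable set across a hyperplane through the origin in a direction that transports mass towards a fixed pole can only decrease the measure of every $t$-enlargement, and iterating this operation produces, in the limit, a spherical cap of the same measure.
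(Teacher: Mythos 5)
The paper itself does not prove this lemma; it states it as a recalled fact and refers the reader to \cite{ASbook} (Chapter 5) and \cite{Levy}, merely remarking that the second inequality ``follows from'' the cap--volume bound $\sigma(\text{cap of geodesic radius }\theta)\leq\tfrac12(\sin\theta)^{n-1}$. So the comparison is really between your sketch and the standard argument.

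Your reduction from Lipschitz concentration to spherical isoperimetry is the standard route, and the first inequality and the median-to-mean passage are unproblematic. The gap is in the central computation for the second inequality: the claim that ``a direct computation on $H$ identifies the extremal set with a spherical cap of geodesic radius $\theta$'' is false. Take $H=\{x\in S^{n-1}\st x_1\leq 0\}$ a closed hemisphere and compute the Euclidean distance from a point $x$ with $x_1>0$ to $H$: the nearest point of $H$ lies on the equator, and one finds $d(x,H)=\sqrt{2-2\sqrt{1-x_1^2}}$. Hence $\{d(\cdot,H)>\cos\theta\}=\{x_1>\cos\theta\sqrt{1-\tfrac14\cos^2\theta}\}$, a cap whose geodesic radius $\psi$ satisfies $\sin\psi=1-\tfrac12\cos^2\theta=\tfrac12(1+\sin^2\theta)$. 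Since $\tfrac12(1+\sin^2\theta)>\sin\theta$ for every $\theta\in(0,\pi/2)$, this cap is strictly larger than the cap of radius $\theta$, and your route only yields
\[
\P\bigl(f>M_f+(\cos\theta)L\bigr)\leq\tfrac12\Bigl(\tfrac{1+\sin^2\theta}{2}\Bigr)^{n-1},
\]
not the asserted $\tfrac12(\sin\theta)^{n-1}$. This is not a cosmetic loss: the base $\tfrac12(1+\sin^2\theta)$ is always $>\tfrac12$, so multiplying by $4^n$ in the union bound of Proposition \ref{prop:l_1-l_1} would no longer give decay, and the downstream constant $\sqrt{16/15}$ would be lost. (In fact the function $f(x)=d_{\mathrm{Euc}}(x,H)$ is Euclidean $1$-Lipschitz with median $0$ and saturates the weaker bound, which shows that isoperimetry-plus-Lipschitz alone cannot produce the sharper constant $(\sin\theta)^{n-1}$ for an arbitrary Lipschitz $f$.) Where the paper actually invokes the second inequality, $f$ is \emph{linear}, $f(\psi)=\langle\psi,\mathbf 1\rangle$, so $\P(f>M_f+L\cos\theta)$ is \emph{exactly} the measure of a cap of geodesic radius $\theta$, and the inequality reduces directly to the cap-volume bound without any isoperimetric step; that is what the paper's parenthetical remark is alluding to. Your sketch, as written, proves a weaker statement than the lemma claims and would not support the use made of it later in the paper.
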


\

\section{Quantum norm of a random bi-orthogonally invariant matrix}\label{sec:quantum}

In this section we describe how to calculate, with high probability, the quantum norm of a random bi-orthogonally invariant matrix in terms of its trace norm.

\subsection{$\gamma_2^*$ norm of an orthogonal matrix}

We begin with a simple result about the $\gamma_2^*$ norm of an orthogonal matrix. Note that, opposite to most of the paper, this is not a probabilistic statement, but one which holds for every orthogonal matrix (not just a Haar distributed one).

\begin{lemma}\label{gammastar} Let $O$ be an orthogonal matrix on $\R^n$. Then, $\gamma_2^*(O)=n$. \end{lemma}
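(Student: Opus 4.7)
The plan is to prove matching upper and lower bounds on $\gamma_2^*(O)$, both equal to $n$, by interpreting the supremum in \eqref{def:gamma_2*} via a Hilbert--Schmidt (Frobenius) inner product identity.

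For the upper bound, I would pack the vectors appearing in \eqref{def:gamma_2*} into matrices. Given any choice of unit vectors $u_1,\ldots,u_n,v_1,\ldots,v_n$ in some Hilbert space $\mathrm{E}$ (which we may assume is $\R^n$ since only $2n$ vectors are involved), let $U$ and $V$ be the matrices whose columns are the $u_i$'s and $v_j$'s respectively. Then $\langle u_i,v_j\rangle=(U^t V)_{i,j}$, so
\[ \sum_{i,j=1}^n O_{i,j}\langle u_i,v_j\rangle = \Tr(O^t U^t V) = \langle UO,V\rangle_{HS}. \]
By Cauchy--Schwarz for the Hilbert--Schmidt inner product, the right-hand side is at most $\|UO\|_{HS}\,\|V\|_{HS}$. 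Since $O$ is orthogonal, the Hilbert--Schmidt (Frobenius) norm is invariant under right multiplication by $O$, so $\|UO\|_{HS}=\|U\|_{HS}$. Because all columns of $U$ and $V$ are unit vectors, $\|U\|_{HS}^2=\|V\|_{HS}^2=n$, and we conclude that $\sum_{i,j}O_{i,j}\langle u_i,v_j\rangle\leq n$. Taking the supremum yields $\gamma_2^*(O)\leq n$.

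For the lower bound, I would exhibit an explicit choice achieving $n$. Denote by $R_i(O)$ the $i$-th row of $O$, viewed as a unit vector in $\R^n$ (rows of an orthogonal matrix have unit norm), and set $u_i=R_i(O)$ and $v_j=e_j$. Then $\langle u_i,v_j\rangle=O_{i,j}$, so
\[ \sum_{i,j=1}^n O_{i,j}\langle u_i,v_j\rangle = \sum_{i,j=1}^n O_{i,j}^2 = \|O\|_{HS}^2 = n, \]
the last equality again reflecting that an orthogonal matrix has Frobenius norm $\sqrt{n}$. Combining the two bounds gives $\gamma_2^*(O)=n$.

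There is essentially no hard step here: the entire argument rests on the orthogonal invariance of the Frobenius norm together with the normalization of the $u_i$'s and $v_j$'s. The only minor subtlety is noticing that the supremum in \eqref{def:gamma_2*} is really a Hilbert--Schmidt inner product in disguise once one assembles the vectors into matrices; from that viewpoint, both the upper bound (Cauchy--Schwarz) and the matching lower bound (use the rows of $O$ themselves) are immediate. Since $O$ can be any orthogonal matrix, this result will later be applied with $O=UV^t$ extracted from the SVD of a random matrix, which is why it is stated deterministically rather than probabilistically.
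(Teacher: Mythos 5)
Your proof is correct and is essentially the same argument as the paper's: both pack the $u_i$'s and $v_j$'s into matrices, apply Cauchy--Schwarz for the Frobenius inner product, and use the fact that an orthogonal matrix preserves the Frobenius norm, with the lower bound obtained by the natural choice $u_i=R_i(O)$, $v_j=e_j$ (the paper phrases this via norm duality as taking $\tau=O$). One minor imprecision: with $2n$ vectors you can a priori only reduce the ambient Hilbert space to $\R^{2n}$, not $\R^n$, but this is harmless since your Cauchy--Schwarz step works in any dimension $m$.
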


\begin{proof} Norm duality between $\gamma_2^*(\cdot)$ and $\gamma_2(\cdot)$ tells us that
\[ \gamma_2^*(O)=\sup \{ \langle O, \tau\rangle \st \gamma_2(\tau)\leq 1\}. \]

To prove that $\gamma_2^*(O)\leq n$, let us consider one such $\tau$. Since $\gamma_2(\tau)\leq 1$, we know that there exist vectors $u_1, \ldots, u_n, v_1, \ldots, v_n$ in the unit ball of some $\R^m$ such that, for every $1\leq i, j\leq n$, $\tau_{i,j}=\langle u_i,v_j\rangle$. That is, if we define the $n\times m$ matrix $U$, resp. $V$, whose rows are the vectors $u_1, \ldots, u_n$, resp. $v_1, \ldots, v_n$, we have $\tau=UV^t$. Then,
\begin{equation}\label{omega2}
\langle O,\tau\rangle=\Tr (OVU^t) =\sum_{i,j=1}^n (OV)_{i,j} U_{j,i}\leq \left(\sum_{i,j=1}^n (OV)_{i,j}^2\right)^{1/2}  \left(\sum_{i,j=1}^n U_{j,i}^2\right)^{1/2}=n,
\end{equation}
where the next to last inequality follows from Cauchy-Schwarz inequality and the last inequality uses the fact that $O$ is an orthogonal matrix, hence an isometry.

To show the other direction, namely $\gamma_2^*(O)\geq n$, note that equality in equation \eqref{omega2} is attained for $\tau=O$, which clearly verifies $\gamma_2(O)\leq 1$.
\end{proof}

\subsection{$\gamma_2$ norm of a random bi-orthogonally invariant matrix}

The following well-known relation between the trace norm of a matrix and its $\gamma_2$ norm follows immediately from Lemma \ref{gammastar}. Note that, again, this result is true for any matrix, not just with high probability for certain random matrices.

\begin{proposition} \label{prop:lower-q}
For every matrix $T$ on $\R^n$,
\[ \gamma_2(T)\geq \frac{\|T\|_1}{n}. \]
\end{proposition}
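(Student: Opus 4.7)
The plan is to use the duality between $\gamma_2$ and $\gamma_2^*$ (already invoked in the proof of Lemma \ref{gammastar}), namely
\[ \gamma_2(T) = \sup\{\langle T, A\rangle \st \gamma_2^*(A)\leq 1\}, \]
and to exhibit a single explicit matrix $A$ which simultaneously has $\gamma_2^*(A)=1$ and $\langle T,A\rangle = \|T\|_1/n$. This gives the bound immediately.

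The natural candidate is suggested by the singular value decomposition $T = U \Sigma V^t$. Since $U$ and $V$ are orthogonal, the matrix $O := UV^t$ is also orthogonal, and I compute
\[ \langle T, O\rangle = \Tr(T O^t) = \Tr(U\Sigma V^t V U^t) = \Tr(\Sigma) = \|T\|_1. \]
On the other hand, Lemma \ref{gammastar} gives $\gamma_2^*(O) = n$, so that $A := O/n$ is admissible in the dual formulation of $\gamma_2(T)$, and pairs with $T$ to give exactly $\|T\|_1/n$.

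There is no serious obstacle here: the whole proof is one line of SVD bookkeeping combined with Lemma \ref{gammastar}. The only thing worth being careful about is the convention for $\langle \cdot,\cdot\rangle$ (trace of $T A^t$, as defined in the introduction) and the fact that the SVD works over $\R$, both of which are already in place in the paper. Note that this is tight precisely when $T$ itself is (a scalar multiple of) an orthogonal matrix, which is coherent with the remark at the end of the proof of Lemma \ref{gammastar}.
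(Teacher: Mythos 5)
Your argument is correct and is exactly the paper's proof: pick $O=UV^t$ from the singular value decomposition $T=U\Sigma V^t$, pair it against $T$ to get $\Tr(\Sigma)=\|T\|_1$, and divide by $\gamma_2^*(O)=n$ from Lemma \ref{gammastar} via duality. The only difference is that you spell out the trace computation $\Tr(TO^t)=\Tr(\Sigma)$ explicitly, which the paper leaves implicit.
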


\begin{proof}
Let $T=U\Sigma V^t$ be the singular value decomposition of $T$. Applying Lemma \ref{gammastar} and norm duality between $\gamma_2(\cdot)$ and $\gamma_2^*(\cdot)$ we get
\[ \gamma_2(T)  \geq   \frac{\langle T, UV^t\rangle}{\gamma_2^*(UV^t)}=\frac{\Tr(\Sigma)}{n} =\frac{\|T\|_1}{n}, \]
as wanted.
\end{proof}

Our main result in this section says that for random bi-orthogonally invariant matrices, the above inequality is, with high probability as $n\rightarrow\infty$, essentially tight.

\begin{proposition} \label{prop:upper-q}
Let $T$ be an $n\times n$ random matrix with bi-orthogonally invariant distribution, and assume that there exists a constant $r>0$ such that, with high probability as $n\rightarrow\infty$, $\|T\|_{\infty}\leq \left(r +o(1)\right)\|T\|_1/n$.
Then, with high probability as $n\rightarrow\infty$,
\[ \gamma_2(T) \leq \left(1 + o(1)\right)\frac{\|T\|_1}{n}. \]
\end{proposition}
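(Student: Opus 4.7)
The plan is to exploit the SVD representation from Lemma \ref{SVD} and construct an explicit near-optimal factorization of $T$. Write $T = U\Sigma V^t$ with $U, V$ independent Haar-distributed orthogonal matrices, independent of the diagonal matrix $\Sigma = \diag(\sigma_1,\ldots,\sigma_n)$. The natural candidate factorization for the $\gamma_2$ definition is $T = XY$ with $X := U\Sigma^{1/2}$ and $Y := \Sigma^{1/2}V^t$. With this choice, the $i$-th row of $X$ has squared $\ell_2$-norm $\sum_{k=1}^n \sigma_k U_{i,k}^2$, and the $j$-th column of $Y$ has squared $\ell_2$-norm $\sum_{k=1}^n \sigma_k V_{j,k}^2$. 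So the whole point reduces to controlling, uniformly in $i$ (resp. in $j$), the quadratic form $f_\Sigma(u) := \sum_k \sigma_k u_k^2 = \langle u, \Sigma u\rangle$ evaluated at each row of $U$ (resp. $V$).

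Conditionally on $\Sigma$, each single row of $U$ is marginally uniform on $S^{n-1}$, so I would apply Levy's Lemma (Lemma \ref{levi}) to $f_\Sigma$. Two ingredients are needed: the mean and the Lipschitz constant. The mean is computed directly: for $u$ uniform on $S^{n-1}$, $\E[u_k^2] = 1/n$, hence $\E f_\Sigma = \Tr(\Sigma)/n = \|T\|_1/n$. The Lipschitz constant comes from $\|\nabla f_\Sigma(u)\|_2 = \|2\Sigma u\|_2 \leq 2\|\Sigma\|_\infty = 2\|T\|_\infty$, so $f_\Sigma$ is $2\|T\|_\infty$-Lipschitz on the sphere. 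Levy's Lemma then gives, for every $\epsilon > 0$,
\[ \P\!\left( f_\Sigma > \frac{\|T\|_1}{n} + 2\epsilon\|T\|_\infty\right) \leq e^{-cn\epsilon^2}. \]
A union bound over the $n$ rows of $U$ (and the $n$ rows of $V$, treated symmetrically) yields
\[ \P\!\left(\max_{1\leq i\leq n} f_\Sigma(R_i(U)) > \frac{\|T\|_1}{n} + 2\epsilon\|T\|_\infty\right) \leq 2n\,e^{-cn\epsilon^2}. \]

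Now I would plug in the hypothesis $\|T\|_\infty \leq (r+o(1))\|T\|_1/n$ (valid with high probability) together with a choice such as $\epsilon := (\log n)^{-1/3}$, which makes the failure probability $2n e^{-cn/(\log n)^{2/3}}$ vanish while keeping $2\epsilon \|T\|_\infty = o(\|T\|_1/n)$. Integrating out $\Sigma$ (by conditioning and using that the bound on $\|T\|_\infty$ holds with high probability by assumption), one concludes that with high probability
\[ \|X\|_{\ell_2 \to \ell_\infty^n}^2 = \max_i f_\Sigma(R_i(U)) \leq (1+o(1))\frac{\|T\|_1}{n}, \]
and similarly $\|Y\|_{\ell_1^n \to \ell_2}^2 \leq (1+o(1))\|T\|_1/n$. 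Multiplying and invoking the definition \eqref{def:gamma_2} of $\gamma_2$ gives $\gamma_2(T) \leq \|X\|_{\ell_2 \to \ell_\infty^n}\|Y\|_{\ell_1^n \to \ell_2} \leq (1+o(1))\|T\|_1/n$, which is the desired conclusion.

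There is no serious obstacle: the only mildly delicate point is that the rows of $U$ are not independent, which is why I resort to a plain union bound rather than an independence-based concentration argument. The reason this is still painless is that Levy gives a Gaussian-tail concentration on each row, so the factor $n$ from the union bound is absorbed by choosing $\epsilon \to 0$ slowly enough, and the hypothesis $\|T\|_\infty = O(\|T\|_1/n)$ ensures that the deviation term $\epsilon \|T\|_\infty$ is genuinely of order $o(\|T\|_1/n)$.
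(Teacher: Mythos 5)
Your proposal follows essentially the same route as the paper's proof: SVD decomposition via Lemma \ref{SVD}, the factorization $X = U\Sigma^{1/2}$, $Y = \Sigma^{1/2}V^t$, concentration of $\psi \mapsto \sum_k \sigma_k\psi_k^2$ on the sphere via Levy's Lemma, and a union bound over rows/columns. Two small points where you are actually cleaner than the paper: you compute the mean $\E f_\Sigma = \Tr(\Sigma)/n$ exactly by symmetry (the paper only argues it is asymptotically equivalent to $\|\Sigma\|_1/n$ via the Gaussian representation of $\psi$), and you make the choice $\epsilon = (\log n)^{-1/3}$ explicit, whereas the paper leaves the final limiting argument implicit; your gradient bound for the Lipschitz constant is a standard alternative to the paper's direct telescoping, and both give $2\|\Sigma\|_\infty$.
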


\begin{proof}
Using Lemma \ref{SVD}, we can consider $T$ to be of the form $T=U\Sigma V^t$, where  $U,V,\Sigma$ are independent random matrices and  $U,V$ are Haar distributed orthogonal matrices on $\R^n$.

Now, let $S$ be a fixed diagonal matrix on $\R^n$ with positive eigenvalues $S_1,\ldots,S_n$, and set $r=n\|S\|_{\infty}/\|S\|_1$. Consider next the random matrices $X=U\sqrt{S}$ and $Y=\sqrt{S}V^t$, for $U,V$ independent Haar distributed orthogonal matrices on $\R^n$. Observe that, for each $1\leq i\leq n$, the $i^{\text{th}}$ row of $X$, $R_i(X)=(\sqrt{S_1}U_{i,1},\ldots,\sqrt{S_n}U_{i,n})$, is distributed as $(\sqrt{S_1}\psi_1,\ldots,\sqrt{S_n}\psi_n)$ for $\psi\in S^{n-1}$ a uniformly distributed unit vector.

Moreover, the function $\psi\in S^{n-1}\mapsto \sum_{k=1}^nS_k\psi_k^2\in\R$ has average (with respect to the uniform probability measure) equivalent to $\|S\|_1/n$, as $n\rightarrow\infty$, and Lipschitz constant upper bounded by $2\|S\|_{\infty}$. The first claim can easily be seen to hold if viewing $\psi$ as $g/\|g\|_2$, where $g\in\R^n$ has independent mean $0$ and variance $1$ Gaussian entries. Indeed, such $g$ satisfies the two properties that $\|g\|_2$ is independent from $g/\|g\|_2$ and equivalent to $\sqrt{n}$ as $n\rightarrow\infty$. The second claim follows from the chain of (in)equalities
\begin{align*} 
\left|\sum_{k=1}^nS_k\psi_k^2-\sum_{k=1}^nS_k\varphi_k^2\right| & = \left|\sum_{k=1}^nS_k(\psi_k+\varphi_k)(\psi_k-\varphi_k)\right|\\
& \leq \|S\|_{\infty}\|\psi+\varphi\|_2\|\psi-\varphi\|_2\\ 
& \leq 2\|S\|_{\infty}\|\psi-\varphi\|_2, 
\end{align*}
where the inequality before last follows from H\"{o}lder's inequality and the last inequality follows from the triangle inequality. Hence by Levy's lemma, recalled as Lemma \ref{levi}, we have that there exists a universal constant $c>0$ such that
\[ \forall\ \epsilon>0,\ \P\left(\sum_{k=1}^nS_k\psi_k^2>(1+\epsilon)\frac{\|S\|_1}{n}\right) \leq e^{-cn\epsilon^2/r^2}. \]

Coming back to our initial problem, what we have equivalently shown is that, there exists a universal constant $c>0$ such that, for each $1\leq i\leq n$,
\[ \forall\ \epsilon>0,\ \P\left(\|R_i(X)\|_2^2>(1+\epsilon)\frac{\|S\|_1}{n}\right) \leq e^{-cn\epsilon^2/r^2}, \]
so that by the union bound
\[ \forall\ \epsilon>0,\ \P\left(\exists\ 1\leq i\leq n:\ \|R_i(X)\|_2^2>(1+\epsilon)\frac{\|S\|_1}{n}\right) \leq ne^{-cn\epsilon^2/r^2}. \]
Obviously the same reasoning holds for the columns of $Y$, $C_j(Y)=(\sqrt{S_1}V_{1,j},\ldots,\sqrt{S}_nV_{n,j})$, $1\leq j\leq n$. Hence what we eventually obtain is
\[ \forall\ \epsilon>0,\ \P\left(\exists\ 1\leq i,j\leq n:\ \|R_i(X)\|_2\|C_j(Y)\|_2>(1+\epsilon)\frac{\|S\|_1}{n}\right) \leq 2ne^{-cn\epsilon^2/r^2}. \]
Indeed, defining the three events
\begin{align*} & \mathcal{A}_1 =\text{``}\,\exists\ 1\leq i\leq n:\ \|R_i(X)\|_2^2>(1+\epsilon)\|S\|_1/n\,\text{''},\\
& \mathcal{A}_2 =\text{``}\,\exists\ 1\leq j\leq n:\ \|C_j(Y)\|_2^2>(1+\epsilon)\|S\|_1/n\,\text{''},\\
& \mathcal{B} =\text{``}\,\exists\ 1\leq i,j\leq n:\ \|R_i(X)\|_2\|C_j(Y)\|_2>(1+\epsilon)\|S\|_1/n\,\text{''},
\end{align*}
we clearly have $\mathcal{B}\subset \mathcal{A}_1\cup\mathcal{A}_2$, so that
\[ \P(\mathcal{B})\leq \P(\mathcal{A}_1\cup\mathcal{A}_2)\leq \P(\mathcal{A}_1)+\P(\mathcal{A}_2). \]

We finally just have to recall that $\Sigma$ is independent from $U,V$, and satisfying by assumption that there exists $r>0$ such that $n\|\Sigma\|_{\infty}/\|\Sigma\|_1\leq r+o(1)$ with high probability as $n\rightarrow\infty$. By what precedes, we therefore see that, with high probability as $n\rightarrow\infty$,
\[ \forall\ 1\leq i,j\leq n,\ \left\|R_i\left(U\sqrt{\Sigma}\right)\right\|_2\left\|C_j\left(\sqrt{\Sigma}V^t\right)\right\|_2 \leq \left(1+o(1)\right)\frac{\|\Sigma\|_1}{n}. \]
By definition of $\gamma_2(\cdot)$ (see equation \eqref{def:gamma_2}), this implies exactly the announced result.
\end{proof}

Putting together Propositions \ref{prop:lower-q} and \ref{prop:upper-q}, we immediately obtain the conclusion below on the $\gamma_2$ norm of a random bi-orthogonally invariant matrix.

\begin{theorem} \label{th:quantum-norm}
Let $T$ be an $n\times n$ random matrix with bi-orthogonally invariant distribution, and assume that there exists a constant $r>0$ such that, with high probability as $n\rightarrow\infty$, $\|T\|_{\infty}\leq \left(r +o(1)\right)\|T\|_1/n$.
Then, with high probability as $n\rightarrow\infty$,
\[ \gamma_2(T) = \left(1 \pm o(1)\right)\frac{\|T\|_1}{n}. \]
\end{theorem}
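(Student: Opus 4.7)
The plan is to derive Theorem \ref{th:quantum-norm} as an immediate consequence of the two preceding propositions, which together pin $\gamma_2(T)$ to $\|T\|_1/n$ from both sides. For the lower bound I would simply invoke Proposition \ref{prop:lower-q}, which gives $\gamma_2(T) \geq \|T\|_1/n$ \emph{deterministically} for any matrix on $\R^n$; no probabilistic input is needed, so this inequality holds with probability $1$ for the random $T$ at hand. For the matching upper bound, I would apply Proposition \ref{prop:upper-q} to $T$: the spectral flatness hypothesis $\|T\|_{\infty} \leq (r + o(1))\|T\|_1/n$ in Theorem \ref{th:quantum-norm} is verbatim the hypothesis of that proposition, which therefore yields $\gamma_2(T) \leq (1 + o(1))\|T\|_1/n$ with high probability as $n \to \infty$.

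Intersecting these two statements (the first having probability $1$, the second probability $1-o(1)$, so the intersection still has probability $1-o(1)$) yields the two-sided bound $\gamma_2(T) = (1 \pm o(1))\|T\|_1/n$ with high probability, which is the conclusion asked for. There is no real obstacle at this assembly step: all of the substantive work sits inside Proposition \ref{prop:upper-q}, whose proof already used Lemma \ref{SVD} to factor $T = U\Sigma V^t$ with $U,V$ independent Haar-distributed orthogonals, then ran Levy's lemma on the function $\psi \mapsto \sum_k \Sigma_k \psi_k^2$ (whose mean is equivalent to $\|\Sigma\|_1/n$ and whose Lipschitz constant is at most $2\|\Sigma\|_{\infty}$), and finally controlled all $n$ row and column norms of $U\sqrt{\Sigma}$ and $\sqrt{\Sigma}V^t$ by a union bound — the flatness assumption $n\|\Sigma\|_{\infty}/\|\Sigma\|_1 \leq r + o(1)$ being precisely what guarantees that this union bound succeeds with a deviation still of size $o(1)$.
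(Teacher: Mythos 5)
Your proposal is correct and is exactly the paper's own proof: the theorem follows immediately by combining the deterministic lower bound from Proposition~\ref{prop:lower-q} with the high-probability upper bound from Proposition~\ref{prop:upper-q}. Your summary of the internal mechanics of Proposition~\ref{prop:upper-q} (SVD via Lemma~\ref{SVD}, Levy's lemma on $\psi \mapsto \sum_k \Sigma_k \psi_k^2$, union bound over rows and columns) also matches the paper's argument.
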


\

\section{Classical norm of a random bi-orthogonally invariant matrix} \label{sec:classical}

In this section we describe how to lower bound, with high probability, the classical norm of a random bi-orthogonally invariant matrix in terms of its trace norm.

\subsection{$\ell_1^n\otimes_{\epsilon}\ell_1^n$ norm of a random orthogonal matrix} In order to achieve our goal, we will need first an upper bound on the $\ell_1^n\otimes_{\epsilon}\ell_1^n$ norm of a Haar distributed orthogonal matrix, which might be of independent interest.

\begin{proposition} \label{prop:l_1-l_1}
Let $O$ be a Haar distributed orthogonal matrix on $\R^n$. Then, with high probability as $n\rightarrow\infty$,
\begin{equation} \label{eq:l_1-l_1} \|O\|_{\ell_1^n\otimes_{\epsilon}\ell_1^n} \leq \left(\sqrt{\frac{15}{16}}+o(1)\right)n. \end{equation}
\end{proposition}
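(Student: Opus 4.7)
The plan is to reduce the problem to a union bound over sign-pairs, exploit the left-invariance of the Haar measure to make the relevant random variable the inner product of a fixed unit vector with a uniformly distributed point on the sphere, and then use the sharp (not Gaussian) form of Levy's lemma so that the cap volume $(\sin\theta)^{n-1}/2$ matches the $4^n$ complexity of sign vectors with room to spare.

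First I would unfold the definition. By \eqref{def:l_1-l_1},
\[ \|O\|_{\ell_1^n\otimes_{\epsilon}\ell_1^n} = \sup_{\alpha,\beta\in\{\pm 1\}^n} \langle \alpha, O\beta\rangle = n\sup_{\alpha,\beta\in\{\pm 1\}^n} \left\langle \tfrac{\alpha}{\sqrt{n}}, O\tfrac{\beta}{\sqrt{n}}\right\rangle, \]
and both $u:=\alpha/\sqrt{n}$ and $v:=\beta/\sqrt{n}$ are genuine unit vectors in $S^{n-1}$. The goal is therefore to bound the sup of $n\langle u, Ov\rangle$ over $4^n$ specific pairs $(u,v)$.

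Second, because $O$ is Haar distributed on the orthogonal group, for any \emph{fixed} unit vector $v$, $Ov$ is uniformly distributed on $S^{n-1}$. Hence, for each fixed pair of sign vectors, $\langle u, Ov\rangle$ has the same distribution as $\langle u, \psi\rangle$ where $\psi$ is uniform on $S^{n-1}$. The function $\psi\mapsto \langle u,\psi\rangle$ is $1$-Lipschitz, and by the symmetry $\psi\leftrightarrow -\psi$ its median equals $0$. Therefore the second (tight) inequality of Lemma \ref{levi} yields, for every $0<\theta<\pi/2$,
\[ \P\bigl(\langle u, \psi\rangle > \cos\theta\bigr) \leq \tfrac{1}{2}(\sin\theta)^{n-1}. \]

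Third, a straightforward union bound over the $4^n$ sign pairs gives
\[ \P\!\left(\|O\|_{\ell_1^n\otimes_{\epsilon}\ell_1^n} > n\cos\theta\right) \leq 4^n \cdot \tfrac{1}{2}(\sin\theta)^{n-1} = \tfrac{1}{2}\bigl(4\sin\theta\bigr)^{n-1}\cdot 4. \]
This tends to $0$ exponentially fast provided $4\sin\theta < 1$, i.e.\ $\sin\theta < 1/4$, which corresponds to $\cos\theta > \sqrt{15/16}$. Choosing $\sin\theta = (1-\eta)/4$ with $\eta=\eta(n)\to 0$ slowly (for instance $\eta = 1/\log n$) still gives a probability bound going to $0$ while $\cos\theta = \sqrt{1-(1-\eta)^2/16} = \sqrt{15/16}+o(1)$, which is exactly \eqref{eq:l_1-l_1}.

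The only delicate point in this strategy is that one must use the sharp cap bound rather than the Gaussian-type concentration. Indeed, the first inequality of Lemma \ref{levi} only gives $\P(\langle u,\psi\rangle > \varepsilon) \leq e^{-cn\varepsilon^2}$ with an unspecified universal $c>0$, so the union bound would only yield $\|O\|_{\ell_1^n\otimes_{\epsilon}\ell_1^n}\leq Cn$ for some unexplicit $C$. The exact constant $\sqrt{15/16}$ emerges precisely from balancing the $4^n$ cardinality of the sign cube against the factor $4^{-(n-1)}=(1/16)^{(n-1)/2}$ coming from a cap of sine $1/4$. Since this argument uses the very crude union bound over all $4^n$ sign pairs, it is natural to expect (as the authors indicate) that the constant $\sqrt{15/16}$ might not be truly optimal, but within this scheme it is the best one can extract.
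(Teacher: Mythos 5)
Your proof is correct and follows essentially the same route as the paper: reduce the supremum to the $4^n$ sign pairs, use Haar invariance to reduce each $\langle\alpha,O\beta\rangle$ to a spherical marginal $\langle u,\psi\rangle$, apply the sharp cap-volume form of Levy's lemma, and union-bound, with the constant $\sqrt{15/16}$ emerging from balancing $4^n$ against $(\sin\theta)^{n-1}$. The only cosmetic differences are that you normalize the sign vectors and treat the map as $1$-Lipschitz rather than taking Lipschitz constant $\sqrt n$, and that you make the $o(1)$ explicit by letting $\theta$ approach $\arcsin(1/4)$ slowly.
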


\begin{proof}
Define the random vector $x\in\R^n$ by, for each $1\leq i\leq n$, $x_i=\sum_{j=1}^nO_{i,j}$. Then, observe that $x=\sqrt{n}\psi$ with $\psi\in S^{n-1}$ a uniformly distributed unit vector. Indeed, since $O$ is orthogonal, we have firstly that $\|x\|=\sqrt{n}$, and since $O$ is orthogonal and Haar distributed, we have secondly that for any orthogonal matrix $V$ on $\R^n$, $x$ and $Vx$ have the same distribution. Therefore, for any $0<\theta<\pi/2$,
\[ \P\left(\sum_{i,j=1}^nO_{i,j}>(\cos\theta)n\right) = \P\left(\sum_{i=1}^n x_i>(\cos\theta)n\right) = \P\left(\sum_{i=1}^n \psi_i>(\cos\theta)\sqrt{n}\right). \]
Now, the function $\psi\in S^{n-1} \mapsto \sum_{i=1}^n \psi_i\in\R$ has median (with respect to the uniform probability measure) equal to $0$, and Lipschitz constant upper bounded by $\sqrt{n}$ (the latter claim follows from Cauchy-Schwarz inequality). Hence by Levy's lemma, recalled as Lemma \ref{levi}, we obtain that
\[ \P\left(\sum_{i=1}^n \psi_i>(\cos\theta)\sqrt{n}\right) \leq \frac{1}{2}(\sin\theta)^{n-1}. \]
Recapitulating, what we have shown so far is that, for any fixed $\alpha,\beta\in\{\pm 1\}^n$,
\[ \P\left(\sum_{i,j=1}^n\alpha_i\beta_jO_{i,j}>(\cos\theta)n\right) \leq \frac{1}{2}(\sin\theta)^{n-1}. \]
Consequently, by the union bound, we have
\[ \P\left(\exists\ \alpha,\beta\in\{\pm 1\}^n:\ \sum_{i,j=1}^n\alpha_i\beta_jO_{i,j}>(\cos\theta)n\right) \leq 4^n\frac{1}{2}(\sin\theta)^{n-1}. \]
By definition of $\|\cdot\|_{\ell_1^n\otimes_{\epsilon}\ell_1^n}$ (see equation \eqref{def:l_1-l_1}), this means precisely that, for any $0<\theta<\pi/2$,
\begin{equation} \label{eq:l_1-l_1'} \P\left( \|O\|_{\ell_1^n\otimes_{\epsilon}\ell_1^n} \leq (\cos\theta)n \right) \geq 1-2(4\sin\theta)^{n-1}. \end{equation}
In order to conclude, we thus just have to observe that, as soon as $\theta<\arcsin(1/4)$, i.e. equivalently $\cos\theta>\sqrt{15/16}$, the right hand-side in equation \eqref{eq:l_1-l_1'} goes to $1$ exponentially fast as $n$ grows.
\end{proof}

Let us comment a bit on Proposition \ref{prop:l_1-l_1}. The first thing that may be worth pointing out is that the $\ell_1^n\otimes_{\epsilon}\ell_1^n$ norm of a matrix is nothing else than its $\ell_{\infty}^n{\rightarrow}\ell_1^n$ norm. We recall that, for any $1\leq p,q\leq \infty$, the $\ell_p^n{\rightarrow}\ell_q^n$ norm is naturally defined as follows: for any matrix $A$ on $\R^n$,
\[ \|A\|_{\ell_p^n{\rightarrow}\ell_q^n}:= \sup\left\{\,\|Ax\|_q \st x\in\R^n,\ \|x\|_p\leq 1\,\right\}. \]
Hence in particular, the $\ell_2^n{\rightarrow}\ell_2^n$ norm is simply the operator norm. Now, by duality between $\|\cdot\|_1$ and $\|\cdot\|_{\infty}$, combined with extremality of sign vectors in the unit ball for $\|\cdot\|_{\infty}$, it is clear that, as claimed,
\begin{align*} 
\|A\|_{\ell_{\infty}^n{\rightarrow}\ell_1^n} & = \sup\left\{\,\|Ax\|_1 \st x\in\R^n,\ \|x\|_{\infty}\leq 1\,\right\} \\
& = \sup \left\{\,\bra{y}A\ket{x} \st x,y\in\{\pm 1\}^n\,\right\} \\
& = \|A\|_{\ell_1^n\otimes_{\epsilon}\ell_1^n}. 
\end{align*}

Then, we may first remark that, by Cauchy-Schwarz inequality, we obviously have: for any matrix $A$ on $\R^n$,
\[ \|A\|_{\ell_{\infty}^n{\rightarrow}\ell_1^n}\leq n\|A\|_{\ell_2^n{\rightarrow}\ell_2^n}. \]
In the case of an orthogonal matrix $O$ on $\R^n$, $\|O\|_{\ell_2^n{\rightarrow}\ell_2^n}=1$, and it therefore always holds that $\|O\|_{\ell_{\infty}^n{\rightarrow}\ell_1^n}\leq n$. Besides, there of course exist orthogonal matrices on $\R^n$ (such as e.g. the identity matrix) for which this inequality is in fact an equality. However, what Proposition \ref{prop:l_1-l_1} tells us is that, for $O$ a Haar distributed orthogonal matrix on $\R^n$, a slightly better upper bound on $\|O\|_{\ell_{\infty}^n{\rightarrow}\ell_1^n}$ actually holds with high probability, namely $\sqrt{15/16}\,n$.

What is more, for $O$ a Haar distributed orthogonal matrix on $\R^n$,
\[ \E\sup\left\{\, \|Ox\|_1 \st x\in\R^n,\ \|x\|_{\infty}\leq 1\,\right\} \geq \sup \left\{\, \E\|Ox\|_1 \st x\in\R^n,\ \|x\|_{\infty}\leq 1\,\right\} \underset{n\rightarrow\infty}{\sim} \sqrt{\frac{2}{\pi}} n. \]
The argument leading to the last equivalence is exactly the same as in the proof of Proposition \ref{prop:l_1-l_1}: for each $x\in\{\pm 1\}^n$, $Ox=\sqrt{n}\psi$ with $\psi\in S^{n-1}$ a uniformly distributed unit vector, and it is well-known that $\E\|\psi\|_1\sim\sqrt{2/\pi}\sqrt{n}$ for such $\psi$. Then, by concentration for Lipschitz functions on the orthogonal group (see e.g. the Appendix in \cite{MM}), this is in fact true not only on average but also with high probability. Together with Proposition \ref{prop:l_1-l_1}, this means that, for $O$ a Haar distributed orthogonal matrix on $\R^n$, with high probability as $n\rightarrow\infty$,
\[ \left(\sqrt{\frac{2}{\pi}}-o(1)\right) n \leq \|O\|_{\ell_1^n\otimes_{\epsilon}\ell_1^n} \leq \left(\sqrt{\frac{15}{16}}+o(1)\right) n. \]
Numerics suggest that the true asymptotic value of $\|O\|_{\ell_1^n\otimes_{\epsilon}\ell_1^n}/n$ would actually be $\sqrt{15/16}$, but we were not able to prove that mathematically.

\subsection{Lower bound on the $\ell_{\infty}^n\otimes_{\pi}\ell_{\infty}^n$ norm of a random bi-orthogonally invariant matrix}

\begin{theorem} \label{th:classical-norm}
Let $T$ be an $n\times n$ random matrix with bi-orthogonally invariant distribution. Then, with high probability as $n\rightarrow\infty$,
\begin{equation} \label{eq:classical-norm} \|T\|_{\ell_{\infty}^n\otimes_{\pi}\ell_{\infty}^n} \geq \left(\sqrt{\frac{16}{15}} - o(1)\right)\frac{\|T\|_1}{n}. \end{equation}
\end{theorem}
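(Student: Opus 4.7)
The plan is to prove Theorem~\ref{th:classical-norm} via duality: since $\|\cdot\|_{\ell_\infty^n\otimes_\pi\ell_\infty^n}$ is dual to $\|\cdot\|_{\ell_1^n\otimes_\epsilon\ell_1^n}$, we have
\[ \|T\|_{\ell_\infty^n\otimes_\pi\ell_\infty^n} \geq \frac{\langle T,A\rangle}{\|A\|_{\ell_1^n\otimes_\epsilon\ell_1^n}} \]
for any nonzero matrix $A$. The strategy is to pick the natural witness coming from the SVD of $T$, analogously to what is done in the proof of Proposition~\ref{prop:lower-q}, and then invoke Proposition~\ref{prop:l_1-l_1} to control the denominator.

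Concretely, using Lemma~\ref{SVD}, I would write $T=U\Sigma V^t$ with $U,V,\Sigma$ mutually independent and $U,V$ Haar distributed orthogonal matrices on $\R^n$. Setting $A=UV^t$, a direct computation gives
\[ \langle T, UV^t\rangle = \Tr(U\Sigma V^t\, V U^t)=\Tr(\Sigma) = \|T\|_1. \]
It remains to upper bound $\|UV^t\|_{\ell_1^n\otimes_\epsilon\ell_1^n}$. Here the key observation is that, by left- and right-invariance of the Haar measure on $O(n)$ together with the independence of $U$ and $V$, the product $UV^t$ is itself Haar distributed on $O(n)$. Consequently, Proposition~\ref{prop:l_1-l_1} applies directly and yields, with high probability as $n\to\infty$,
\[ \|UV^t\|_{\ell_1^n\otimes_\epsilon\ell_1^n} \leq \left(\sqrt{\tfrac{15}{16}}+o(1)\right) n. \]

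Combining the two estimates:
\[ \|T\|_{\ell_\infty^n\otimes_\pi\ell_\infty^n} \geq \frac{\langle T,UV^t\rangle}{\|UV^t\|_{\ell_1^n\otimes_\epsilon\ell_1^n}} \geq \frac{\|T\|_1}{\left(\sqrt{15/16}+o(1)\right) n} = \left(\sqrt{\tfrac{16}{15}}-o(1)\right)\frac{\|T\|_1}{n}, \]
which is exactly the desired bound.

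The only subtlety — and what I would call out as the main point requiring care rather than a real obstacle — is the justification that $UV^t$ is genuinely Haar distributed; this follows from the independence of $U$ and $V$ provided by Lemma~\ref{SVD}, so that conditioning on $V$ leaves $U V^t$ Haar distributed by left-invariance, hence unconditionally Haar distributed. Once this is observed, Proposition~\ref{prop:l_1-l_1} can be applied as a black box and the rest of the argument is immediate. Notice also that this proof simultaneously exhibits the explicit Bell inequality $UV^t$ announced in the introduction after Theorem~\ref{th:q-c-norms:intro}, since the same witness $A=UV^t$ saturates the duality bound for $\gamma_2(T)$ (up to $1+o(1)$) in Proposition~\ref{prop:lower-q}.
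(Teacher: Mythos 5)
Your proof is correct and follows essentially the same route as the paper: singular value decomposition $T=U\Sigma V^t$ via Lemma~\ref{SVD}, the duality bound with witness $A=UV^t$, and Proposition~\ref{prop:l_1-l_1} applied to the Haar-distributed orthogonal matrix $UV^t$. The only difference is cosmetic: you spell out explicitly why $UV^t$ is Haar distributed (independence of $U,V$ plus invariance of Haar measure on $O(n)$), whereas the paper simply asserts this; that added justification is sound.
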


\begin{proof}
Let $T=U\Sigma V^t$ be the singular value decomposition of $T$. By norm duality between $\|\cdot\|_{\ell_{\infty}^n\otimes_{\pi}\ell_{\infty}^n}$ and $\|\cdot\|_{\ell_1^n\otimes_{\epsilon}\ell_1^n}$, it is clear that
\begin{equation} \label{eq:bell-ineq} \|T\|_{\ell_{\infty}^n\otimes_{\pi}\ell_{\infty}^n} \geq \frac{\langle T,UV^t\rangle}{\|UV^t\|_{\ell_1^n\otimes_{\epsilon}\ell_1^n}} = \frac{\Tr(\Sigma)}{\|UV^t\|_{\ell_1^n\otimes_{\epsilon}\ell_1^n}}. \end{equation}
Now, by the bi-orthogonally invariance hypothesis on the distribution of $T$, we know from Lemma \ref{SVD} that we can consider $UV^t$ to be,  first, independent from $\Sigma$, and, second, a Haar distributed orthogonal matrix on $\R^n$. The latter fact implies by Proposition \ref{prop:l_1-l_1} that $\|UV^t\|_{\ell_1^n\otimes_{\epsilon}\ell_1^n}\leq \big(\sqrt{15/16}+o(1)\big)n$ with high probability. Inserting this upper bound in equation \eqref{eq:bell-ineq}, and using that $\Tr(\Sigma)=\|T\|_1$ is independent from $\|UV^t\|_{\ell_1^n\otimes_{\epsilon}\ell_1^n}$, yields exactly the announced lower bound.
\end{proof}

An important question at this point is that of optimality in Theorem \ref{th:classical-norm}. There are two places where we might lose something. First of all, we may be doing things roughly when estimating $\|O\|_{\ell_1^n\otimes_{\epsilon}\ell_1^n}$ for $O$ a Haar distributed orthogonal matrix on $\R^n$. Indeed, as we already discussed before, we do not know whether or not the upper bound $\sqrt{15/16}\,n$ provided by Proposition \ref{prop:l_1-l_1} is tight. Second, the choice of the orthogonal matrices appearing in the singular value decomposition of $T$ as Bell functional may not be the best one. We showed it is the optimal choice in order to compute the quantum norm of $T$, but there is a priori no reason it remains so in order to compute its classical norm. 

\

\section{Consequences}
\subsection{Separation between the quantum norm and the classical norm of a random bi-orthogonally invariant matrix}

As a straightforward consequence of Theorems \ref{th:quantum-norm} and \ref{th:classical-norm}, providing estimates on, respectively, the quantum norm and the classical norm of random bi-orthogonally invariant matrices, we obtain Theorem \ref{th:q-c-norms} below.

\begin{theorem} \label{th:q-c-norms}
Let $T$ be an $n\times n$ random matrix with bi-orthogonally invariant distribution, and assume that there exists a constant $r>0$ such that, with high probability as $n\rightarrow\infty$, $\|T\|_{\infty}\leq \left(r +o(1)\right)\|T\|_1/n$. Then, with high probability as $n\rightarrow\infty$,
\begin{equation} \label{eq:q-c-norms} \|T\|_{\ell_{\infty}^n\otimes_{\pi}\ell_{\infty}^n} \geq \left(\sqrt{\frac{16}{15}} - o(1)\right)\gamma_2(T), \end{equation}
so that $\tau=T/\gamma_2(T)$ is an $n\times n$ correlation matrix which is quantum by construction and non-classical with high probability as $n\rightarrow\infty$.
\end{theorem}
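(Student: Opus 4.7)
The statement is essentially a one-line corollary of the two main technical theorems of the preceding sections, so the plan is to simply combine them cleanly and handle the ``correlation matrix'' conclusion carefully.

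First, I would invoke Theorem \ref{th:quantum-norm}, which under the spectrum-flatness hypothesis $\|T\|_\infty \leq (r + o(1))\|T\|_1/n$ gives, with high probability as $n \to \infty$,
\[ \gamma_2(T) \leq \left(1 + o(1)\right)\frac{\|T\|_1}{n}. \]
Note that this is the only place where the hypothesis on $\|T\|_\infty$ enters. Second, I would invoke Theorem \ref{th:classical-norm}, which requires only bi-orthogonal invariance and gives, with high probability as $n \to \infty$,
\[ \|T\|_{\ell_\infty^n \otimes_\pi \ell_\infty^n} \geq \left(\sqrt{\tfrac{16}{15}} - o(1)\right)\frac{\|T\|_1}{n}. \]

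Next, by a union bound, both inequalities hold simultaneously with high probability. On the intersection of these two events, dividing the second by the first yields
\[ \|T\|_{\ell_\infty^n \otimes_\pi \ell_\infty^n} \geq \frac{\sqrt{16/15} - o(1)}{1 + o(1)}\,\gamma_2(T) = \left(\sqrt{\tfrac{16}{15}} - o(1)\right) \gamma_2(T), \]
which is exactly \eqref{eq:q-c-norms}.

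Finally, for the concluding assertion about $\tau = T/\gamma_2(T)$: by homogeneity of $\gamma_2(\cdot)$ we have $\gamma_2(\tau) = 1$, so $\tau \in \mathcal{Q}$ by construction. The same homogeneity applied to $\|\cdot\|_{\ell_\infty^n \otimes_\pi \ell_\infty^n}$ combined with \eqref{eq:q-c-norms} gives $\|\tau\|_{\ell_\infty^n \otimes_\pi \ell_\infty^n} \geq \sqrt{16/15} - o(1)$, which is strictly larger than $1$ for $n$ large enough; hence $\tau \notin \mathcal{C}$ with high probability. There is no real obstacle here since all the work has been done in Theorems \ref{th:quantum-norm} and \ref{th:classical-norm}; the only point requiring mild care is making sure the two high-probability events are combined on the same probability space (which is automatic since both refer to the same random matrix $T$) and that the $o(1)$ terms in the quotient remain $o(1)$, which they clearly do since $\sqrt{16/15}$ is a positive constant bounded away from $0$.
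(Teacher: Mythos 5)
Your proof is correct and takes exactly the approach the paper intends: the paper presents Theorem \ref{th:q-c-norms} as ``a straightforward consequence of Theorems \ref{th:quantum-norm} and \ref{th:classical-norm}'' without spelling out the details, and your write-up supplies precisely those details (union bound, ratio of the two estimates, homogeneity for the conclusion about $\tau$). Nothing is missing.
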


One especially interesting case of Theorem \ref{th:q-c-norms} is when $T=G/\sqrt{n}$, for $G$ an $n\times n$ Gaussian matrix, that is, a matrix with independent mean $0$ and variance $1$ real Gaussian entries. Such $T$ has a bi-orthogonally invariant distribution and satisfies $\|T\|_1\sim (8/3\pi)n$ and $\|T\|_{\infty}\sim 2$ with high probability as $n\rightarrow\infty$ (see e.g. \cite{AGZ}, Chapter 2, for a proof), so it indeed fulfills the hypotheses of Theorem \ref{th:q-c-norms} (with $r=3\pi/4$). As a consequence, the matrix $\tau=T/\gamma_2(T)$ is in $\mathcal{Q}$ and with high probability not in $\mathcal{C}$. Now, such $\tau$ is actually by construction uniformly distributed on the border of $\mathcal{Q}$. So what this result tells us is that, in most directions in $\R^{n^2}$, the borders of $\mathcal{Q}$ and $\mathcal{C}$ do not coincide. Additionally, we can exhibit a Bell functional separating these two borders most of the time, namely the matrix $UV^t$ where $\tau=U\Sigma V^t$ is the singular value decomposition of the considered direction. Note that this orthogonal matrix $UV^t$ is unique if $\tau$ is invertible. Hence, since Gaussian matrices are invertible with probability one, given a realization of $\tau$ the Bell functional is fixed and explicit. Let us summarize this discussion in Corollary \ref{cor:gaussian} below.

\begin{corollary} \label{cor:gaussian}
Let $\tau$ be uniformly distributed on the border of the set of $n\times n$ quantum correlation matrices. Then as $n\rightarrow\infty$, $\tau$ is with high probability outside the set of $n\times n$ classical correlation matrices. Furthermore, if $\tau=U\Sigma V^t$ is the singular value decomposition of $\tau$, then with high probability, $A=UV$ is a Bell functional certifying it, since
\[ \max\left\{\,\Tr(\tau'A^t) \st \tau'\in\mathcal{C}\,\right\} \leq \left(\sqrt{\frac{15}{16}}+o(1)\right)\Tr(\tau A^t) < \Tr(\tau A^t). \]
\end{corollary}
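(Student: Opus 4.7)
The plan is to reduce the corollary to Theorems \ref{th:quantum-norm} and \ref{th:q-c-norms} together with Proposition \ref{prop:l_1-l_1}, once one has a convenient probabilistic realization of ``a uniformly random point of $\partial\mathcal{Q}$''. Concretely, I would sample $G$ an $n\times n$ real Gaussian matrix with i.i.d.\ $N(0,1)$ entries and set $\tau=G/\gamma_2(G)$. Because the joint law of the entries of $G$ is rotationally invariant on $\R^{n^2}$, the random direction determined by $G$ is uniform on the unit sphere of $\R^{n^2}$, and since $\gamma_2$ is a norm, the rescaling $G\mapsto G/\gamma_2(G)$ realizes a point uniform on $\partial\mathcal{Q}$. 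Standard asymptotics (e.g.\ \cite{AGZ}, Chapter 2) give $\|G\|_1\sim(8/3\pi)\,n^{3/2}$ and $\|G\|_\infty\sim 2\sqrt{n}$ with high probability, so the flat-spectrum hypothesis of Theorem \ref{th:q-c-norms} is satisfied with $r=3\pi/4$. Applying Theorem \ref{th:q-c-norms} to $G$ and dividing by $\gamma_2(G)$ then yields $\|\tau\|_{\ell_\infty^n\otimes_\pi\ell_\infty^n}\geq \sqrt{16/15}-o(1)>1$ with high probability, which already proves the first assertion.

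For the explicit Bell certificate, let $\tau=U\Sigma V^t$ be the SVD and set $A:=UV^t$. By Lemma \ref{SVD}, $U$ and $V$ are independent Haar-distributed orthogonal matrices and $\Sigma$ is independent from $(U,V)$; consequently $A$ is itself Haar distributed (the product of a Haar orthogonal matrix with an independent orthogonal one is again Haar) and is independent from $\Sigma$. Proposition \ref{prop:l_1-l_1} then gives $\|A\|_{\ell_1^n\otimes_\epsilon\ell_1^n}\leq(\sqrt{15/16}+o(1))n$ with high probability, which by duality between $\|\cdot\|_{\ell_\infty^n\otimes_\pi\ell_\infty^n}$ and $\|\cdot\|_{\ell_1^n\otimes_\epsilon\ell_1^n}$ equals $\max\{\Tr(\tau'A^t):\tau'\in\mathcal{C}\}$. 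A direct computation gives
\[ \Tr(\tau A^t)=\Tr(U\Sigma V^t\,VU^t)=\Tr(\Sigma)=\|\tau\|_1=\|G\|_1/\gamma_2(G), \]
and Theorem \ref{th:quantum-norm} applied to $G$ gives $\|G\|_1/\gamma_2(G)\sim n$. Union-bounding over the (independent) events bounding $\|A\|_{\ell_1^n\otimes_\epsilon\ell_1^n}$ from above and $\|G\|_1/\gamma_2(G)$ from below, I then get $\max_{\tau'\in\mathcal{C}}\Tr(\tau'A^t)\leq(\sqrt{15/16}+o(1))\Tr(\tau A^t)<\Tr(\tau A^t)$ with high probability. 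Uniqueness of $A$ follows because $G$ is a.s.\ invertible with a.s.\ distinct singular values, so the SVD is essentially unique and the signs of matching pairs of singular vectors cancel in $UV^t$.

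The main difficulty here is really just bookkeeping, not a genuine new technical obstacle: one has to check that $UV^t$ is Haar distributed and independent of $\Sigma$ (both immediate from Lemma \ref{SVD} and the right-invariance of Haar measure), and that the flat-spectrum hypothesis needed to invoke Theorems \ref{th:quantum-norm} and \ref{th:q-c-norms} really holds for the Gaussian model. No new probabilistic estimate is required beyond those already proved in Sections \ref{sec:quantum} and \ref{sec:classical}.
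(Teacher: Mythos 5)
Your proof is correct and follows essentially the same route as the paper's own argument: realize a uniform point of $\partial\mathcal{Q}$ via a Gaussian matrix rescaled by its $\gamma_2$ norm, invoke Theorem~\ref{th:q-c-norms} for the first assertion, and deduce the Bell-certificate inequality by combining Proposition~\ref{prop:l_1-l_1} (applied to the Haar-distributed $UV^t$, independent of $\Sigma$ by Lemma~\ref{SVD}) with $\Tr(\tau\,(UV^t)^t)=\Tr\Sigma=\|\tau\|_1\sim n$ from Theorem~\ref{th:quantum-norm}. You also correctly use $A=UV^t$ (the corollary's ``$A=UV$'' is a typo, as the surrounding discussion in the paper confirms).
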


A (weaker) consequence of Corollary \ref{cor:gaussian} is in terms of average widths of the convex sets $\mathcal{C}$ and $\mathcal{Q}$. In fact, it is rather a statement about the average widths of the convex sets $\mathcal{C}^*$ and $\mathcal{Q}^*$, dual to $\mathcal{C}$ and $\mathcal{Q}$, that we will be able to derive here. Before establishing it, let us recall the concept of \textit{mean width} from classical convex geometry. Given a convex body $K\subset\R^k$, denoting by $\psi$ a uniformly distributed unit vector in $\R^k$, its mean width is defined as
\[ w(K) := \E \sup \left\{\, \langle\psi,x\rangle \st x\in K \,\right\}. \]
This spherical averaging can be replaced by a (usually more convenient to deal with) Gaussian averaging, just noticing that, denoting by $g$ a Gaussian vector in $\R^k$,
\[ w(K) \underset{k\rightarrow\infty}{\sim} \frac{1}{\sqrt{k}}\, \E \sup \left\{\, \langle g,x\rangle \st x\in K \,\right\}. \]
Correlation matrices of size $n\times n$ can be seen as convex bodies in $\R^{n^2}$. We thus have, letting $\mathcal{T}$ be either $\mathcal{Q}$ or $\mathcal{C}$, $\mathcal{T}^*$ be its dual, and denoting by $G$ a Gaussian matrix on $\R^n$,
\[ w(\mathcal{T}^*) \underset{n\rightarrow\infty}{\sim} \frac{1}{n}\, \E \sup \left\{\, \Tr(GA^t) \st A\in\mathcal{T}^* \,\right\}. \]

\begin{corollary} \label{cor:mean-width}
The duals of the sets of $n\times n$ quantum and classical correlation matrices satisfy the following mean width estimates, as $n\rightarrow\infty$,
\[ w(\mathcal{Q}^*) \sim \frac{8}{3\pi}\frac{1}{\sqrt{n}}\ \text{and}\ w(\mathcal{C}^*) \geq \left(\sqrt{\frac{16}{15}}-o(1)\right) \frac{8}{3\pi}\frac{1}{\sqrt{n}}. \]
\end{corollary}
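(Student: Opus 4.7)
The plan is to apply the Gaussian formula for the mean width recalled just before the statement, then identify the suprema appearing there with natural matrix norms via convex duality, and finally invoke the estimates of Theorems \ref{th:quantum-norm} and \ref{th:classical-norm}. Since $\mathcal{Q}$ is the unit ball of $\gamma_2(\cdot)$, its polar $\mathcal{Q}^*$ is the unit ball of $\gamma_2^*(\cdot)$, so
\[ \sup\{\Tr(GA^t) \st A\in\mathcal{Q}^*\} = \gamma_2(G), \]
and analogously $\sup\{\Tr(GA^t) \st A\in\mathcal{C}^*\} = \|G\|_{\ell_\infty^n\otimes_\pi\ell_\infty^n}$, where $G$ is a standard $n\times n$ Gaussian matrix. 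After the prefactor $1/n$ from the Gaussian formula, it therefore suffices to prove
\[ \E\gamma_2(G) \sim \frac{8}{3\pi}\sqrt{n} \quad \text{and} \quad \E\|G\|_{\ell_\infty^n\otimes_\pi\ell_\infty^n} \geq \left(\sqrt{\frac{16}{15}}-o(1)\right)\frac{8}{3\pi}\sqrt{n}. \]

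Next, I would set $T=G/\sqrt{n}$, which is bi-orthogonally invariant and satisfies $\|T\|_1\sim(8/3\pi)n$ and $\|T\|_\infty\sim 2$ with high probability (standard facts about Gaussian matrices, as used already for Theorem \ref{th:q-c-norms}), hence meets the spectral-flatness hypothesis with $r=3\pi/4$. Applying Theorems \ref{th:quantum-norm} and \ref{th:classical-norm} to $T$ and rescaling yields, with high probability as $n\to\infty$,
\[ \gamma_2(G) = (1\pm o(1))\frac{\|G\|_1}{n} \quad \text{and} \quad \|G\|_{\ell_\infty^n\otimes_\pi\ell_\infty^n} \geq \left(\sqrt{\frac{16}{15}}-o(1)\right)\frac{\|G\|_1}{n}. \]
Standard concentration of the nuclear norm (which is $\sqrt{n}$-Lipschitz w.r.t.\ the Frobenius norm) moreover gives $\E\|G\|_1/n \sim (8/3\pi)\sqrt{n}$.

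The main, though mild, obstacle is the passage from these ``with high probability'' statements to expectations, since the Lipschitz constants of the norms $\gamma_2(\cdot)$ and $\|\cdot\|_\pi$ w.r.t.\ the Frobenius norm are too large to extract the expectation from Gaussian concentration alone. For the lower bound on $\E\gamma_2(G)$, however, the deterministic inequality $\gamma_2(G)\geq\|G\|_1/n$ of Proposition \ref{prop:lower-q} already suffices. For the upper bound, and for the lower bound on $\E\|G\|_\pi$, let $\mathcal{B}$ denote the complementary bad event, of probability $o(1)$. I would combine two crude deterministic bounds: $\gamma_2(G)\leq\|G\|_\infty$ (which falls out of the SVD factorization argument in the proof of Proposition \ref{prop:lower-q}, before Levy's lemma is invoked) and $\|G\|_1/n\leq\|G\|_2/\sqrt{n}$ (Cauchy--Schwarz). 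Together with $\E\|G\|_\infty^2=O(n)$ (from $1$-Lipschitzness of the operator norm and Gaussian concentration) and $\E\|G\|_2^2=n^2$, another Cauchy--Schwarz yields $\E[\|G\|_\infty\mathbf{1}_\mathcal{B}]$ and $\E[(\|G\|_1/n)\mathbf{1}_\mathcal{B}]$ both of order $o(\sqrt{n})$, i.e.\ negligible compared to $\E\|G\|_1/n$. Splitting each expectation over $\mathcal{B}$ and its complement then delivers
\[ \E\gamma_2(G) = (1+o(1))\frac{8}{3\pi}\sqrt{n} \quad \text{and} \quad \E\|G\|_{\ell_\infty^n\otimes_\pi\ell_\infty^n} \geq \left(\sqrt{\frac{16}{15}}-o(1)\right)\frac{8}{3\pi}\sqrt{n}, \]
which, upon dividing by $n$, are exactly the claimed mean width estimates.
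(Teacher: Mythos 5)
Your proposal follows the same structure as the paper's proof: write the Gaussian mean-width formula, use convex duality to identify $\sup_{A\in\mathcal{Q}^*}\Tr(GA^t)$ with $\gamma_2(G)$ and $\sup_{A\in\mathcal{C}^*}\Tr(GA^t)$ with $\|G\|_{\ell_\infty^n\otimes_\pi\ell_\infty^n}$, and then apply Theorems~\ref{th:quantum-norm} and~\ref{th:classical-norm} to $T=G/\sqrt{n}$. The difference is that you explicitly address the passage from the high-probability estimates of those two theorems to estimates in expectation, which the paper's own proof simply asserts without justification. Your argument for this step is correct and cleanly organized: the deterministic inequality $\gamma_2(G)\geq\|G\|_1/n$ from Proposition~\ref{prop:lower-q} gives the lower bound on $\E\gamma_2(G)$ for free; for the remaining two bounds you split the expectation over the bad event $\mathcal{B}$ and its complement, control the good part by the high-probability estimates, and control the bad part by the crude deterministic bounds $\gamma_2(G)\leq\|G\|_\infty$ (which indeed falls out of the factorization $G=(U\sqrt{\Sigma})(\sqrt{\Sigma}V^t)$ used in the proof of Proposition~\ref{prop:upper-q}, not~\ref{prop:lower-q} as you wrote) and $\|G\|_1/n\leq\|G\|_2/\sqrt{n}$ together with Cauchy--Schwarz and the second-moment bounds $\E\|G\|_\infty^2=O(n)$, $\E\|G\|_2^2=n^2$, $\P(\mathcal{B})=o(1)$. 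So your proof is not merely correct---it fills in a genuine (if mild) gap in the published argument, at the cost of one extra paragraph of bookkeeping. The only inaccuracy is the cited proposition number for the bound $\gamma_2(G)\leq\|G\|_\infty$, which is cosmetic.
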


\begin{proof}
As explained in the Introduction, the convex bodies $\mathcal{Q}^*$ and $\mathcal{C}^*$ are simply the unit balls for the norms $\gamma_2^*(\cdot)$ and $\|\cdot\|_{\ell_1^n\otimes_{\epsilon}\ell_1^n}$, respectively. Hence, by definition of the mean width, we have, denoting by $G$ a Gaussian matrix on $\R^n$,
\begin{align*} w(\mathcal{Q}^*) & \underset{n\rightarrow\infty}{\sim} \frac{1}{n}\,\E \sup \left\{\, \Tr(GA^t) \st \gamma_2^*(A)\leq 1\, \right\} = \frac{1}{n}\,\E\gamma_2(G)\\
w(\mathcal{C}^*) & \underset{n\rightarrow\infty}{\sim} \frac{1}{n}\,\E \sup \left\{\, \Tr(GA^t) \st \|A\|_{\ell_1^n\otimes_{\epsilon}\ell_1^n}\leq 1\, \right\} = \frac{1}{n}\,\E\|G\|_{\ell_{\infty}^n\otimes_{\pi}\ell_{\infty}^n}, \end{align*}
where we just used that $\gamma_2(\cdot)$ is dual to $\gamma_2^*(\cdot)$ and $\|\cdot\|_{\ell_{\infty}^n\otimes_{\pi}\ell_{\infty}^n}$ is dual to $\|\cdot\|_{\ell_1^n\otimes_{\epsilon}\ell_1^n}$.

Now, we know by Theorem \ref{th:quantum-norm} that, as $n\rightarrow\infty$, $\E\gamma_2(G)\sim (8/3\pi)\sqrt{n}$ and by Theorem \ref{th:classical-norm} that, as $n\rightarrow\infty$, $\E\|G\|_{\ell_{\infty}^n\otimes_{\pi}\ell_{\infty}^n} \geq \big(\sqrt{16/15}-o(1)\big)(8/3\pi)\sqrt{n}$, which concludes the proof of Corollary \ref{cor:mean-width}.
\end{proof}

In words, Corollary \ref{cor:mean-width} tells us the following: the ratio $w(\mathcal{C}^*)/w(\mathcal{Q}^*)$ is asymptotically at least $\sqrt{16/15}$, hence stays lower bounded away from $1$. This result complements earlier findings on the mean width of $\mathcal{Q}$ and $\mathcal{C}$. Indeed, it was proved in \cite{Amb+} that, as $n\rightarrow\infty$,
\[ w(\mathcal{Q}) \sim 2\sqrt{n}\ \text{and}\ w(\mathcal{C}) \leq \left(2\sqrt{\ln 2}+o(1)\right)\sqrt{n}. \]
The results established in \cite{Amb+} are actually about random Bernoulli Bell functionals instead of random Gaussian ones, but the estimate on their quantum value and the upper bound on their classical value remain true in the Gaussian case. It was therefore already known that the ratio $w(\mathcal{Q})/w(\mathcal{C})$ is asymptotically at least $1/\sqrt{\ln 2}$, hence it stays lower bounded away from $1$.

\subsection{Non-local quantum correlations from random unit vectors}
In this subsection we use the previous estimates to study the random quantum correlation matrices considered in \cite{GJPV}, i.e.
\begin{align}\label{correlation inner}
\tau=(\langle u_i,v_j\rangle)_{i,j=1}^n,
\end{align}
where the vectors $u_1,\dots, u_n, v_1,\dots, v_n$ are independently identically uniformly distributed in the unit sphere of $\mathbb R^m$, for some $m\in\N$. It was proved in \cite{GJPV} that the asymptotic (non-)local behaviour of such (by construction quantum) correlation matrix depends on the limit ratio $\alpha=\lim_{n\rightarrow\infty}m/n$. The results in \cite {GJPV} showed that a correlation matrix sampled in this way is asymptotically with high probability non-local if $\alpha <0.004$, and local if $\alpha>2$. Here we improve the bound on the non-local regime up to a limit ratio of $\alpha<0.1269$.

\begin{theorem}\label{prodgauss}
Let $n,m$ be two natural numbers and set $\alpha=m/n$. Let us consider $2n$ vectors $u_1,\dots, u_n, v_1,\dots, v_n$ sampled independently according to the uniform measure in the unit sphere of $\mathbb R^m$ and let us denote by $\tau=(\langle u_i,v_j\rangle)_{i,j=1}^n$ the corresponding quantum correlation matrix. Then, there exists an $\alpha_0$ such that if $\alpha< \alpha_0\approx0.1269$, $\tau$ is non-local with probability tending to $1$ as $n$ tends to infinity.
\end{theorem}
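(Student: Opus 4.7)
The approach is to pass to a nearby bi-orthogonally invariant model, apply Theorem \ref{th:classical-norm} there, and identify the resulting critical threshold via free probability.

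First, I realize each unit vector as a normalized Gaussian: set $u_i=g_i/\|g_i\|$ and $v_j=h_j/\|h_j\|$ with $g_i,h_j$ independent standard Gaussian vectors in $\R^m$, and introduce the auxiliary matrix $\tau':=GH^t/m$, where $G,H$ are the $n\times m$ matrices with rows $g_i,h_j$. Then $\tau=\Lambda_G\tau'\Lambda_H$, with $\Lambda_G=\diag(\sqrt{m}/\|g_i\|)_{i=1}^n$ and $\Lambda_H=\diag(\sqrt{m}/\|h_j\|)_{j=1}^n$. Standard $\chi^2$-concentration combined with a union bound over the $n$ rows gives $\|\Lambda_G^{\pm 1}-I\|_{\infty}=o(1)$ (and similarly for $\Lambda_H$) with high probability. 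The purpose of this decomposition is that $\tau'$ is bi-orthogonally invariant: for fixed orthogonal matrices $O_1,O_2$, $O_1\tau'O_2^t=(O_1G)(O_2H)^t/m$ is again a normalized product of two independent $n\times m$ Gaussian matrices, hence distributed as $\tau'$.

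Second, I apply Theorem \ref{th:classical-norm} to $\tau'$, obtaining $\|\tau'\|_{\ell_{\infty}^n\otimes_{\pi}\ell_{\infty}^n}\geq (\sqrt{16/15}-o(1))\|\tau'\|_1/n$ with high probability. The projective norm on $\ell_\infty^n\otimes\ell_\infty^n$ is submultiplicative under left and right diagonal action, and the trace norm obeys the analogous inequality, so that writing $\tau'=\Lambda_G^{-1}\tau\Lambda_H^{-1}$ and using $\|\Lambda_G^{\pm 1}\|_\infty,\|\Lambda_H^{\pm 1}\|_\infty=1+o(1)$ yields
\[ \|\tau\|_{\ell_{\infty}^n\otimes_{\pi}\ell_{\infty}^n} \geq \frac{\|\tau'\|_{\ell_{\infty}^n\otimes_{\pi}\ell_{\infty}^n}}{\|\Lambda_G^{-1}\|_{\infty}\|\Lambda_H^{-1}\|_{\infty}} \geq \left(\sqrt{\frac{16}{15}}-o(1)\right)\frac{\|\tau\|_1}{n}, \]
where the last step also uses $\|\tau'\|_1=(1+o(1))\|\tau\|_1$. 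Since $\gamma_2(\tau)\leq 1$ by construction of $\tau$ as an inner-product matrix of unit vectors, non-locality reduces to showing $\|\tau\|_1/n>\sqrt{15/16}$ asymptotically.

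Finally, I would compute $f(\alpha):=\lim_n \|\tau\|_1/n=\lim_n \|GH^t/m\|_1/n$ via free probability. The nonzero eigenvalues of $GH^tHG^t/m^2$ coincide with those of $\alpha^{-2}(G^tG/n)(H^tH/n)$, and the product $(G^tG/n)(H^tH/n)$ has limiting spectral distribution $\mu_{\alpha}:=MP_{\alpha}\boxtimes MP_{\alpha}$, the free multiplicative convolution of two Marchenko-Pastur laws of parameter $\alpha$, with $S$-transform $S_{\mu_\alpha}(z)=(1+\alpha z)^{-2}$. A short computation then identifies $f(\alpha)=\int\sqrt{x}\,d\mu_{\alpha}(x)$, which is continuous and strictly decreasing in $\alpha$ with $f(0)=1$. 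Solving $f(\alpha_0)=\sqrt{15/16}$ (either by inverting the $S$-transform to recover the density of $\mu_\alpha$ in closed form as a dilated Fuss--Catalan-type distribution, or by direct numerics) yields $\alpha_0\approx 0.1269$; for $\alpha<\alpha_0$ the chain above gives $\|\tau\|_{\ell_{\infty}^n\otimes_{\pi}\ell_{\infty}^n}>1\geq\gamma_2(\tau)$ with high probability, establishing non-locality.

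The main obstacle is this last step: converting the $S$-transform description of $\mu_\alpha$ into a sufficiently tractable formula for $\int\sqrt{x}\,d\mu_\alpha(x)$, and then rigorously pinning down the crossover at $\sqrt{15/16}$. The value $\alpha_0\approx 0.1269$ is intimately tied to the constant $\sqrt{15/16}$ produced by Proposition \ref{prop:l_1-l_1}, so any tightening of that bound would propagate directly to an enlarged non-local regime.
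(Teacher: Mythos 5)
Your proof is correct and reaches the same numerical threshold, but it handles the approximation of $\tau$ by $GH^t/m$ in a genuinely different, and arguably cleaner, way. The paper decomposes additively, $\tau = GH^t/m + (\tau - GH^t/m)$, expands the error into three bilinear terms built from $\varepsilon_i = g_i/\|g_i\| - g_i/\sqrt{m}$ and $\delta_j$, and then controls each term's projective norm by invoking Grothendieck's inequality to pass to the $\gamma_2$ norm and bounding the latter via the row/column norms from Proposition \ref{concentracion}. You instead observe the exact multiplicative identity $\tau = \Lambda_G\,\tau'\,\Lambda_H$ with $\tau' = GH^t/m$ and $\Lambda_G,\Lambda_H$ diagonal, and exploit the submultiplicativity $\|(D_1\otimes D_2)\cdot\|_{\ell_\infty^n\otimes_\pi\ell_\infty^n}\leq\|D_1\|_{\ell_\infty\to\ell_\infty}\|D_2\|_{\ell_\infty\to\ell_\infty}\|\cdot\|_{\ell_\infty^n\otimes_\pi\ell_\infty^n}$ together with $\chi^2$-concentration to get $\|\Lambda_G^{\pm1}\|_\infty,\|\Lambda_H^{\pm1}\|_\infty = 1+o(1)$. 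This bypasses Grothendieck's inequality entirely and packages the error control into one line; it buys a shorter argument at the cost of needing the (easily verified) tensor-side submultiplicativity. Both routes rest on the same two pillars, namely Theorem \ref{th:classical-norm} applied to the bi-orthogonally invariant $GH^t$ and an asymptotic evaluation of $\|GH^t\|_1/(mn)$; the paper phrases the latter via the Stieltjes-transform characterization of M\"uller's theorem (Theorem \ref{LawProdGauss}), while you use the equivalent $S$-transform/free-multiplicative-convolution description $\mu_\alpha = MP_\alpha\boxtimes MP_\alpha$, taking care to strip the atom at zero when $\alpha<1$. One small caveat: $MP_\alpha\boxtimes MP_\alpha$ for $\alpha\neq 1$ is not literally a dilation of the Fuss--Catalan law, only structurally analogous, but since you fall back on direct numerics (as the paper does, via \cite{DP14}) this does not affect the conclusion.
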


Before proving Theorem \ref{prodgauss} we want to mention that the results presented in the previous sections not only allow us to improve the bounds  in \cite{GJPV} but they go deeper in the understanding of the correlations of the form (\ref{correlation inner}). Indeed, since such matrix $\tau$ has a bi-orthogonally invariant distribution, the same reasonings as in Theorem \ref{th:quantum-norm} can be done to estimate $\gamma_2(\tau)$ as a function of $\|\tau\|_1$, and there are explicit formulae for the asymptotic behaviour of this last norm. That is, we can compute the asymptotic behaviour of $\gamma_2(\tau)$. Moreover, although we still do not know whether $\tau$ is local or not for every ratio $m/n$, Theorem \ref{th:q-c-norms:intro} guarantees that its $\ell_{\infty}^n\otimes_{\pi}\ell_{\infty}^n$ norm is with high probability strictly larger than its $\gamma_2$ norm for every ratio $m/n$. That is, if we divide  $\tau$ by its $\gamma_2$ norm, we will typically obtain a quantum non-local correlation.

The idea of the proof of Theorem \ref{prodgauss} is the following: We will approximate the quantum correlation $\tau$ by a product of two (renormalized) Gaussian matrices $GH^t/m$, such that
$$ \lim_{m,n\rightarrow\infty} \left\|\tau- \frac{1}{m}GH^t\right\|_{\ell_{\infty}^n\otimes_{\pi}\ell_{\infty}^n} =0. $$
Hence, $\tau$ will be, asymptotically, a non-local correlation if and only if
$$ \lim_{m,n\rightarrow\infty} \frac{1}{m}\|GH^t\|_{\ell_{\infty}^n\otimes_{\pi}\ell_{\infty}^n} = C >1. $$
The product of two Gaussian matrices is bi-orthogonally invariant and thanks to Theorem \ref{th:q-c-norms} we can lower bound its ${\ell_{\infty}^n\otimes_{\pi}\ell_{\infty}^n}$ norm. If $\alpha< \alpha_0$, the lower bound will tend to a number greater than $1$ with probability tending to $1$, as $m,n$ grow.

We will need the limiting empirical distribution of the product of two Gaussian matrices. This was first proven in \cite{Muller02}, and later generalized in several ways (see \cite{GKT14} and references therein).
\begin{theorem}\label{LawProdGauss}
Let $G=(g_{i,j})_{i, j=1}^{n,m_n}$ and $H=(h_{i,j})_{i, j=1}^{n,m_n}$ be two independent Gaussian random matrices, such that $\alpha:=\lim_{n\rightarrow \infty} m_n/n \in (0,\infty)$. For every $i=1,\ldots,n$, let $\lambda_i$ be the $i$-th eigenvalue of $GH^tHG^t/(nm_n)$, and define the empirical eigenvalue distribution of $GH^tHG^t/(nm_n)$ as
$$ F_n(x)=\frac 1 n \sum_{i=1}^n \chi_{\{\lambda_i\leq x\}} . $$
Then, almost surely,
$$ F_n(x)\underset{n\rightarrow\infty}{\longrightarrow} F(x), $$
where the Stieltjes transform of the distribution function $F(x)$, i.e.
$$s(z)=\int_{-\infty}^{\infty} \frac 1 {x-z} dF(x),$$
is determined by the equation
$$z s(z)-  {z s^2(z)}\left(\frac{\alpha -1 +z s(z)}\alpha\right)=1.$$
\end{theorem}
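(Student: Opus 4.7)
The statement is that the empirical spectral distribution $F_n$ of $M_n := GH^tHG^t/(nm_n)$ converges almost surely to a deterministic limit whose Stieltjes transform satisfies the displayed scalar equation. Conceptually, up to an atom at $0$ of weight $\max(0,1-\alpha)$, this limit is the free multiplicative convolution of two Marchenko--Pastur laws of parameter $\alpha$, and the displayed equation is the algebraic translation of Voiculescu's $S$-transform identity into the Stieltjes formalism. I would attack the theorem in three stages: an almost-sure upgrade of the convergence, identification of the limit via asymptotic freeness, and translation of the resulting $S$-transform into the stated scalar equation.

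\textbf{Stage 1 (Almost-sure upgrade).} Fix $z\in\mathbb C^+$ and set $s_n(z) := (1/n)\operatorname{tr}(M_n-zI)^{-1}$. On the high-probability event $\{\|G\|,\|H\|\leq K\sqrt n\}$ (whose complement is exponentially small by the standard top-eigenvalue bound for Wishart ensembles), the map $(G,H)\mapsto s_n(z)$ is Lipschitz with constant of order $K(\operatorname{Im}z)^{-2}n^{-1/2}$. Gaussian concentration for Lipschitz functions then gives $|s_n(z)-\mathbb E s_n(z)|\to 0$ almost surely at an exponential rate, and a Montel/Vitali argument lifts this from pointwise convergence on $\mathbb C^+$ to almost sure weak convergence of $F_n$ to some $F$, once the limit of $\mathbb E s_n(z)$ has been identified.

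\textbf{Stage 2 (Identification of the limit).} The non-zero spectrum of the $n\times n$ matrix $M_n$ coincides with that of the $m_n\times m_n$ Wishart product $A_nC_n$, where $A_n := H^tH/n$ and $C_n := G^tG/m_n$ are independent. Each factor has an orthogonally-invariant distribution and an empirical spectral distribution converging almost surely to a Marchenko--Pastur law (classical Marchenko--Pastur theorem, provable via Gaussian integration-by-parts on its own resolvent). Independence combined with orthogonal invariance implies asymptotic freeness of $A_n$ and $C_n$ in Voiculescu's sense (the real analogue of the unitary asymptotic freeness theorem for orthogonally-invariant ensembles). By multiplicativity of the $S$-transform, together with $S_{\mathrm{MP}(\alpha)}(w)=(1+\alpha w)^{-1}$ and the scaling rule $S_{c\mu}(w)=c^{-1}S_\mu(w)$, one computes $S_\nu(w) = \alpha/(1+\alpha w)^2$, where $\nu := \lim\mathbb E\mu_{A_nC_n}$.

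\textbf{Stage 3 (Translation to the scalar equation).} Combining the $\psi$--$\chi$--$S$ dictionary $\chi(w)=wS(w)/(1+w)$ and $\psi(u)=-1-u^{-1}s_\nu(u^{-1})$ with $\psi\circ\chi=\mathrm{id}$ produces a parametric relation between $z$ and $s_\nu(z)$. Passing from $s_\nu$ to $s := s_F$ via the identity $s_F = \alpha s_\nu - (1-\alpha)/z$ (which encodes the point mass at $0$ coming from the rectangular shape, $F = \alpha\nu + (1-\alpha)\delta_0$, and remains valid when $\alpha>1$ because $\nu$ itself carries a compensating atom of weight $(\alpha-1)/\alpha$) and eliminating the parameter yields exactly the stated cubic equation for $s(z)$. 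The main obstacle is the algebraic bookkeeping in this final step: the relative scalings of $A_n$, $C_n$ and $M_n$ differ by factors of $\alpha$, and conventions for the $S$-transform under dilation of a measure (as well as the sign convention $1/(x-z)$ versus $1/(z-x)$ for the Stieltjes transform) are notoriously error-prone, so one must track them with care. A fully self-contained alternative bypassing free probability is to derive the fixed-point equation directly by iterated Gaussian integration-by-parts on $\mathbb E\operatorname{tr}(M_n-zI)^{-1}$: introducing the companion transform $\tilde s_n(z) := (1/m_n)\mathbb E\operatorname{tr}(H^tHG^tG/(nm_n)-zI)^{-1}$, integration by parts on the entries of $G$ and then of $H$ produces a coupled linear system for $(s_n,\tilde s_n)$ modulo $O(n^{-1})$ remainders controlled via the Poincar\'e inequality, and elimination of $\tilde s_n$ recovers the announced equation.
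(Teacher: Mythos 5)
The paper does not prove this statement: it is imported from the literature, with \cite{Muller02} credited for the original derivation and \cite{GKT14} for later generalizations, so there is no in-paper argument to compare yours against. What you propose is essentially the standard free-probability proof (and close in spirit to M\"uller's, which also goes through the $S$-transform), with Gaussian concentration supplying the almost-sure statement; it is a sound plan.

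I checked your Stage 3 bookkeeping and it closes. The non-zero spectrum of $M_n$ is that of $(H^tH/n)(G^tG/m_n)$, the second factor being $\alpha^{-1}$ times a mean-one Marchenko--Pastur element, so indeed $S_\nu(w)=\alpha/(1+\alpha w)^2$; setting $w=-1-zs_\nu(z)$ and $\chi(w)=1/z$ gives $s_\nu\,(1-\alpha-\alpha z s_\nu)^2=\alpha(1+zs_\nu)$, and the substitution $zs=\alpha z s_\nu-(1-\alpha)$ collapses this to the displayed cubic. Two caveats are worth recording. First, your worry about sign conventions is vindicated: the elimination yields the displayed equation for the transform $\int(z-x)^{-1}\,dF(x)$, i.e.\ the \emph{negative} of the $s$ defined in the statement. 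The statement as printed is in fact internally inconsistent: with $s(z)=\int(x-z)^{-1}dF(x)$ one has $zs(z)\rightarrow -1$ and $zs(z)^2\rightarrow 0$ as $z\rightarrow\infty$, so the left-hand side tends to $-1$ rather than $1$; a write-up should either flip the sign in the definition of $s$ or in the equation. Second, two routine points in Stages 1--2 should not be left implicit: Gaussian concentration requires a globally Lipschitz function, so $s_n(z)$ must be replaced by a Lipschitz extension off the event $\{\|G\|,\|H\|\leq K\sqrt n\}$ before concentrating and then Borel--Cantelli applied; and the asymptotic freeness of independent orthogonally invariant (real Wishart) ensembles, while true, is a nontrivial theorem that should be cited rather than asserted.
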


We will also use the well known bounds on the norm of a Gaussian vector (see for instance \cite{Barvinok}, Corollary 2.3).

\begin{proposition}\label{concentracion}
Let $(g_{i,j})_{i,j=1}^{n,m}$ be a $n\times m$ Gaussian matrix, and for every $i=1,\dots,n$, define $g_i=(g_{i,j})_{j=1}^m$, Gaussian vector in $\R^m$. Then, for every $i=1,\dots,n$ and every $0<\epsilon<1$,
$$\P\left(\|g_i\|\geq \frac{\sqrt{m}}{\sqrt{1-\epsilon}}\right)\leq e^{-\epsilon^2 m/4}\quad \text{and}\quad \P\left( \|g_i\| \leq \sqrt{m}\sqrt{1-\epsilon}\right)\leq e^{-\epsilon^2 m/4}.$$
As a consequence we have in particular that, for every $0<\epsilon<1$,
$$ \P\left(\max_{i=1,\dots, n}\left\|\frac{g_i}{\|g_i\|}-\frac{g_i}{\sqrt{m}}\right\|>\epsilon\right)\leq 2ne^{-\epsilon^2m/4}.\ $$
\end{proposition}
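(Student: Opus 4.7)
The plan is to reduce the two tail bounds to the standard Chernoff/Laplace-transform machinery for $\chi^2_m$ random variables, exploiting that $\|g_i\|^2 = \sum_{j=1}^m g_{i,j}^2$ has moment generating function $\E[e^{t\|g_i\|^2}] = (1-2t)^{-m/2}$ for $t<1/2$, and then to derive the consequence from a simple geometric identity combined with a union bound over $i$.

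For the upper tail I would apply Markov's inequality to $e^{t\|g_i\|^2}$, obtaining $\P(\|g_i\|^2\geq a) \leq (1-2t)^{-m/2}e^{-ta}$ for every $t\in(0,1/2)$, and optimize at $t = \tfrac{1}{2}(1-m/a)$ (admissible since $a>m$ in the regime of interest). The resulting bound $(m/a)^{m/2}e^{(a-m)/2}$, specialised to $a = m/(1-\epsilon)$, reduces the claim after taking logarithms to the analytic inequality
\[ -\log(1-\epsilon) - \frac{\epsilon}{1-\epsilon} \leq -\frac{\epsilon^2}{2}, \quad \epsilon\in(0,1), \]
which follows by term-wise comparison of the Taylor series $-\log(1-\epsilon) = \sum_{k\geq 1}\epsilon^k/k$ and $\epsilon/(1-\epsilon) = \sum_{k\geq 1}\epsilon^k$. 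Multiplying through by $m/2$ gives the target $e^{-\epsilon^2 m/4}$. The lower tail is handled symmetrically: applying the same template to $\E[e^{-t\|g_i\|^2}] = (1+2t)^{-m/2}$, optimising at $t = \tfrac{1}{2}(m/a-1)$ with $a = m(1-\epsilon)$, reduces the claim to the analogous bound $\epsilon + \log(1-\epsilon) \leq -\epsilon^2/2$, again immediate from series expansion.

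For the consequence, I would start from the elementary identity
\[ \left\|\frac{g_i}{\|g_i\|} - \frac{g_i}{\sqrt m}\right\| = \left|1 - \frac{\|g_i\|}{\sqrt m}\right|, \]
obtained by factoring $g_i$ out of the difference and taking norms. Using $1-\epsilon \leq \sqrt{1-\epsilon}$ on the lower side and the matching comparison $1+\epsilon \geq 1/\sqrt{1-\epsilon}$ (valid in the relevant range of $\epsilon$, up to possibly absorbing a harmless universal constant into the exponent), the event $\{|1 - \|g_i\|/\sqrt m| > \epsilon\}$ is contained in the union $\{\|g_i\|\leq \sqrt m\sqrt{1-\epsilon}\} \cup \{\|g_i\|\geq \sqrt m/\sqrt{1-\epsilon}\}$; each event has probability at most $e^{-\epsilon^2 m/4}$ by what precedes, so a union bound over $i=1,\dots,n$ produces the $2n$ factor.

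The only place where a bit of care is required is the Taylor comparison in the Chernoff optimisation, and the mild bookkeeping needed to reconcile $1\pm \epsilon$ with $(1-\epsilon)^{\pm 1/2}$ in the last step; everything else is mechanical. Since the proposition is ultimately a textbook $\chi^2$ concentration bound (and the paper refers the reader to \cite{Barvinok}, Corollary 2.3), the whole plan amounts to recording the standard computation in a form that matches the specific $e^{-\epsilon^2 m/4}$ scaling declared in the statement.
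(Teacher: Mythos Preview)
The paper does not actually prove this proposition; it simply records it as a known fact and points to \cite{Barvinok}, Corollary 2.3. So there is no in-paper argument to compare against, and your plan---the standard Cram\'er--Chernoff computation for $\chi^2_m$---is exactly the expected textbook route and is essentially correct.

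Two small points are worth tightening. First, the optimised Chernoff bound should read $(a/m)^{m/2}e^{-(a-m)/2}$, not $(m/a)^{m/2}e^{(a-m)/2}$; you clearly used the correct expression when you passed to the analytic inequality $-\log(1-\epsilon)-\epsilon/(1-\epsilon)\leq -\epsilon^2/2$, so this is only a typo. Second, in the consequence step the inclusion $\{\|g_i\|>\sqrt m(1+\epsilon)\}\subset\{\|g_i\|\geq \sqrt m/\sqrt{1-\epsilon}\}$ is equivalent to $(1+\epsilon)^2(1-\epsilon)\geq 1$, which only holds for $\epsilon\leq(\sqrt 5-1)/2\approx 0.618$; your parenthetical hedge about ``absorbing a universal constant'' would change the stated exponent. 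A cleaner fix that preserves the exact constant $1/4$ for all $\epsilon\in(0,1)$ is to bound $\P(\|g_i\|>\sqrt m(1+\epsilon))$ directly by Chernoff with $a=m(1+\epsilon)^2$, which reduces to $\log(1+\epsilon)\leq \epsilon+\epsilon^2/4$, an immediate consequence of $\log(1+\epsilon)\leq\epsilon$.
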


Here, we apply Theorem \ref{th:q-c-norms} to the random correlation matrices studied in \cite{GJPV} and described above.

\begin{corollary}\label{alphavalue} Let $G=(g_{i,j})_{i, j=1}^{n,m}$ and $H=(h_{i,j})_{i, j=1}^{n,m}$ be two independent Gaussian random matrices. Let $\alpha\in (0,+\infty)$ and $m=\alpha n$. Let $C_\alpha=\int_{-\infty}^{+\infty} t^{1/2} dF(t)$, where $F$ is, as in Theorem \ref{LawProdGauss}, the asymptotic empirical eigenvalue distribution of $GH^tHG^t/(nm)$.
Then, with probability $1-o(1)$,
\[\frac 1 {m} \| GH^t\|_{\ell_{\infty}^n\otimes_{\pi}\ell_{\infty}^n}\geq \left(\sqrt{\frac{16}{15}}- o(1)\right)\frac{C_\alpha}{\sqrt{\alpha}}.\]
Moreover, there exists an $\alpha_0$ such that, for any $\alpha < \alpha_0\approx 0.1269$, with probability $1-o(1)$,
\[\lim_{ n \rightarrow\infty} \frac 1 m \| GH^t\|_{\ell_{\infty}^n\otimes_{\pi}\ell_{\infty}^n}>1.\]
\end{corollary}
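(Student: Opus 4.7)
The plan is to apply Theorem \ref{th:classical-norm} directly to $T=GH^t$. The first step is to check bi-orthogonal invariance: since $G$ and $H$ are independent and each Gaussian matrix is bi-orthogonally invariant, for any orthogonal matrices $O_1,O_2$ on $\R^n$, the matrix $O_1 GH^t O_2 = (O_1 G)(O_2^t H)^t$ has the same distribution as $GH^t$. Theorem \ref{th:classical-norm} then yields, with high probability as $n\to\infty$,
\[ \|GH^t\|_{\ell_{\infty}^n\otimes_{\pi}\ell_{\infty}^n} \geq \left(\sqrt{\tfrac{16}{15}} - o(1)\right)\frac{\|GH^t\|_1}{n}. \]

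Second, I would compute $\|GH^t\|_1/(nm)$ by invoking Theorem \ref{LawProdGauss}. Writing $\|GH^t\|_1 = \sum_{i=1}^n \sqrt{\lambda_i(GH^t HG^t)}$ and rescaling gives
\[ \frac{\|GH^t\|_1}{nm} \;=\; \sqrt{\tfrac{n}{m}}\;\frac{1}{n}\sum_{i=1}^n \sqrt{\lambda_i\!\left(\tfrac{GH^t HG^t}{nm}\right)} \;=\; \frac{1}{\sqrt{\alpha}}\int_0^{\infty} \sqrt{t}\,dF_n(t), \]
where $F_n$ is the empirical eigenvalue distribution of $GH^t HG^t/(nm)$. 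By Theorem \ref{LawProdGauss}, $F_n$ converges almost surely to $F$, hence the integral converges almost surely to $C_\alpha$. Combining the two steps gives the first assertion of the corollary:
\[ \frac{1}{m}\|GH^t\|_{\ell_{\infty}^n\otimes_{\pi}\ell_{\infty}^n} \;\geq\; \left(\sqrt{\tfrac{16}{15}} - o(1)\right)\frac{C_\alpha}{\sqrt{\alpha}} \quad \text{with high probability}. \]

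For the second assertion, the task reduces to identifying a threshold $\alpha_0>0$ below which $\sqrt{16/15}\,C_\alpha/\sqrt{\alpha}>1$. The Stieltjes transform equation appearing in Theorem \ref{LawProdGauss} determines $F$ as a Marchenko--Pastur-type law for products (a Fuss--Catalan distribution in the free-probability terminology), from which $C_\alpha=\int \sqrt{t}\,dF(t)$ can be evaluated as an explicit function of $\alpha$. Solving numerically the threshold equation $\sqrt{16/15}\,C_\alpha = \sqrt{\alpha}$ then produces the announced $\alpha_0 \approx 0.1269$; for every $\alpha<\alpha_0$, the lower bound just derived exceeds $1$ in the limit, so in particular strictly so with probability tending to $1$.

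The main technical subtlety I anticipate is the passage from the weak convergence $F_n\to F$ to convergence of the square-root moment $\int \sqrt{t}\,dF_n \to \int \sqrt{t}\,dF$. Since $\sqrt{\,\cdot\,}$ is continuous but unbounded on $[0,\infty)$, a uniform integrability (or tightness) argument is required; this is resolved by the standard operator-norm concentration $\|G\|_{\infty},\|H\|_{\infty}\leq (\sqrt{n}+\sqrt{m})(1+o(1))$ with high probability, which confines the supports of the $F_n$ inside a common compact interval eventually. The remaining ingredient is the purely numerical determination of $\alpha_0$ from the functional equation for the Stieltjes transform, which is a calculus exercise rather than a conceptual difficulty.
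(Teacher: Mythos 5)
Your proposal is correct and follows essentially the same route as the paper: lower bound $\|GH^t\|_{\ell_\infty^n\otimes_\pi\ell_\infty^n}$ by $\sqrt{16/15}\,\|GH^t\|_1/n$, then identify $\|GH^t\|_1/(nm)$ with $\alpha^{-1/2}\int\sqrt{t}\,dF_n(t)$ and pass to the limiting spectral law of Theorem~\ref{LawProdGauss}, finally fixing $\alpha_0$ numerically. Two small remarks are worth making. First, you invoke Theorem~\ref{th:classical-norm} directly, which is actually cleaner than the paper's chain through equation~\eqref{eq:q-c-norms} and Proposition~\ref{prop:lower-q}: the classical-norm lower bound in Theorem~\ref{th:classical-norm} requires only bi-orthogonal invariance, so the ``flat spectrum'' hypothesis that the paper mentions in passing is in fact not needed for this step. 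Second, your caveat about upgrading weak convergence $F_n\to F$ to convergence of the $1/2$-moment $\int\sqrt{t}\,dF_n\to C_\alpha$ is a genuine point that the paper leaves implicit; your fix via the almost-sure operator-norm bound on $G,H$ (hence a uniform bound on the supports of $F_n$) is the right way to close it.
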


\begin{proof}
As $GH^t$ is bi-orthogonally invariant and with flat spectrum, according to equation (\ref{eq:q-c-norms}), we have that with high probability
\[\frac 1 m \|GH^t\|_{\ell_{\infty}^n\otimes_{\pi}\ell_{\infty}^n}  \geq \left(\sqrt{\frac{16}{15}} - o(1)\right)\frac 1 m \gamma_2(GH^t)\geq \left(\sqrt{\frac{16}{15}} - o(1)\right)\frac 1 {mn} \|GH^t\|_1,\]
where the second inequality follows from Proposition \ref{prop:lower-q}.
Applying Theorem \ref{LawProdGauss}, we obtain that with probability tending to $1$ the first claim holds:
\[\frac 1 m \|GH^t\|_{\ell_{\infty}^n\otimes_{\pi}\ell_{\infty}^n}\geq \left(\sqrt{\frac{16}{15}} - o(1)\right)\frac{1}{\sqrt{\alpha}n} \left\|\frac{GH^t}{\sqrt {n m}}\right\|_1 \geq \left(\sqrt{\frac{16}{15}}- o(1)\right)\frac{C_\alpha}{\sqrt{\alpha}}.\]
The existence of $\alpha_0$ follows from the continuity of the density function of the distribution $F$ as a function of $\alpha$. For an analytic expression of this density function see \cite{DP14}. Numerical evaluation of $C_\alpha$ gives the approximation of $\alpha_0$.
\end{proof}

\begin{proof}[Proof of Theorem \ref{prodgauss}]
We will see the unit vectors $u_1,\dots, u_n, v_1,\dots, v_n$ in $\R^m$ as arising from independent normalized Gaussian vectors. Let $G=(g_{i,j})_{i, j=1}^{n,m}$ and $H=(h_{i,j})_{i, j=1}^{n,m}$ be two random matrices whose entries are independent real standard Gaussian variables. For every $i,j=1,\dots , n$, let $g_i=(g_{i,k})_{k=1}^m$ and $h_j=(h_{j,k})_{k=1}^m$ be the row vectors of $G$ and $H$ respectively. Then, the vectors $u_i=g_i/\|g_i\|$ and $ v_j=h_j/\|h_j\|$ are independent uniformly distributed unit vectors in $\R^m$.

The matrix $\tau=(\langle u_i,v_j\rangle)_{i,j=1}^n$ is by construction a quantum correlation matrix. We will show that it is non-local, i.e. that $\|\tau\|_{\ell_{\infty}^n\otimes_{\pi}\ell_{\infty}^n}>1$, with probability tending to $1$ as $n$ goes to infinity. We can write
$$\tau=\frac 1 m GH^t + \left(\tau-\frac 1 m GH^t\right).$$
According to Corollary \ref{alphavalue} we know that $\lim_{ n \rightarrow\infty} \| GH^t\|_{\ell_{\infty}^n\otimes_{\pi}\ell_{\infty}^n}/m=c(\alpha)>1$ if $\alpha< \alpha_0$, with probability tending to $1$ as $n$ grows. To finish the proof we just need to prove that, with probability tending to $1$ as $n$ goes to infinity, $\lim_{ n \rightarrow\infty}\|\tau-GH^t/m \|_{\ell_{\infty}^n\otimes_{\pi}\ell_{\infty}^n}=0$ for $\alpha < \alpha_0$, so that
\[ \|\tau\|_{\ell_{\infty}^n\otimes_{\pi}\ell_{\infty}^n}\geq \left\|\tau-\frac 1 m GH^t \right\|_{\ell_{\infty}^n\otimes_{\pi}\ell_{\infty}^n}+\left\|\frac 1 m GH^t \right\|_{\ell_{\infty}^n\otimes_{\pi}\ell_{\infty}^n}\underset{n\rightarrow\infty}{\longrightarrow} c(\alpha)> 1.\]
In order to show this, we define, for every $i,j=1,\ldots,n$, the vectors $\varepsilon_i:= g_i/\|g_i\|- g_i/\sqrt{m}$ and $\delta_j:= h_j/\|h_j\|- h_j/\sqrt{m}$ in $\R^m$. Then we have
\[\tau_{i,j}- \frac 1 m \langle g_i,h_j\rangle= \frac{1}{\sqrt{m}}\langle \varepsilon_i,h_j\rangle+ \frac{1}{\sqrt{m}}\langle g_i,\delta_j\rangle + \langle \varepsilon_i,\delta_j\rangle. \]
Thus, by the triangle inequality, the quantity $\|\tau- GH^t/m\|_{\ell_{\infty}^n\otimes_{\pi}\ell_{\infty}^n}$ is upper bounded by
\[ \frac{1}{\sqrt{m}}\big\| (\langle \varepsilon_i,h_j\rangle)_{i,j=1}^{n} \big\|_{\ell_{\infty}^n\otimes_{\pi}\ell_{\infty}^n} + \frac{1}{\sqrt{m}} \big\| (\langle g_i,\delta_j\rangle)_{i,j=1}^{n} \big\|_{\ell_{\infty}^n\otimes_{\pi}\ell_{\infty}^n} +\big\| (\langle \varepsilon_i,\delta_j\rangle)_{i, j=1}^{n} \big\|_{\ell_{\infty}^n\otimes_{\pi}\ell_{\infty}^n}. \]
Hence, it is enough to show that each of the terms above tend to $0$ as $n$ goes to infinity. We just prove it for the first term, being the reasoning to prove it for the others entirely analogous.

Grothendieck's inequality (\ref{eq:groth}) allows us to bound the ${\ell_{\infty}^n\otimes_{\pi}\ell_{\infty}^n}$ norm of $\left(\langle \varepsilon_i,h_j\rangle\right)_{i,j=1}^{n}$ in terms of its $\gamma_2$ norm, namely
\[\left\|\left(\langle \varepsilon_i,h_j\rangle\right)_{i, j=1}^{n}\right\|_{\ell_{\infty}^n\otimes_{\pi}\ell_{\infty}^n}\leq K_G\, \gamma_2\left(\left(\langle \varepsilon_i,h_j\rangle\right)_{i, j=1}^{n}\right)\leq K_G\, \max_{i=1,\ldots,n} \|\varepsilon_i\|_2 \max_{j=1,\ldots,n} \left\|h_j\right\|_2,\]
where the last inequality follows from the definition of the $\gamma_2$ norm, recalled as equation (\ref{def:gamma_2}). Now, by Proposition \ref{concentracion} and a union bound argument, we know that, for any $\epsilon>0$,
\[ \P\left(\max_{i=1,\ldots,n} \|\varepsilon_i\|_2>\epsilon\right)\leq 2ne^{-\epsilon^2m/4}\ \text{and}\ \P\left(\max_{j=1,\ldots,n} \left\|h_j\right\|_2>\frac{\sqrt{m}}{\sqrt{1-\epsilon}}\right)\leq ne^{-\epsilon^2m/4}. \]
So putting everything together, we get that, for any $\epsilon>0$,
\[\P\left(\frac{1}{\sqrt{m}}\left\|\left(\langle \varepsilon_i,h_j\rangle\right)_{i,j=1}^{n}\right\|_{\ell_{\infty}^n\otimes_{\pi}\ell_{\infty}^n} \leq K_G \,\frac \epsilon {\sqrt{1-\epsilon}}\right) \geq 1-3ne^{-\epsilon^2m/4}.\]

This completes the proof of Theorem \ref{prodgauss}.
\end{proof}

\

\section*{Acknowledgements}
This research was financially supported by the European Research Council (advanced grant IRQUAT ERC-2010-AdG-267386), the Spanish MINECO (projects FIS2013-40627-P, MTM2014-54240-P, ICMAT Severo Ochoa SEV2015-0554 and ``Ram\'on y Cajal'' program), the Comunidad de Madrid (QUITEMAD+ project S2013-ICE-2801), the Generalitat de Catalunya (CIRIT project 2014-SGR-966), the French CNRS (ANR project Stoq 14-CE25-0033), and John Templeton Foundation grant 48322. The opinions expressed in this publication are those of the authors and do not necessarily reflect the views of the John Templeton Foundation.

\

\addcontentsline{toc}{section}{References}

\end{document}